\documentclass[sigconf,balance=false]{acmart}

\usepackage[utf8]{inputenc} 
\usepackage[T1]{fontenc}    
\usepackage{hyperref}       
\usepackage{url}            
\usepackage{booktabs}       
\usepackage{nicefrac}       
\usepackage{microtype}      
\usepackage{xcolor}         

\usepackage{amsmath,amsfonts,amsthm}
\usepackage{algorithmic}
\usepackage{graphicx}
\usepackage{textcomp}
\usepackage{xcolor}
\usepackage{algorithm}  
\usepackage{subcaption}
\usepackage{multirow}
\usepackage{diagbox}
\usepackage{indentfirst}
\usepackage{setspace}
\usepackage{stmaryrd}
\usepackage{enumitem}

\usepackage{longtable}

\theoremstyle{definition}
\newtheorem{definition}{Definition}[section]
\newtheorem{theorem}{Theorem}

\setcopyright{none}
\renewcommand\footnotetextcopyrightpermission[1]{}
\acmConference{}
\acmBooktitle{}

\definecolor{burntorange}{rgb}{0.8, 0.33, 0.0}

\newcommand{\makevect}[1]{\ensuremath{\boldsymbol{#1}}}
\renewcommand{\vec}[1]{\makevect{#1}}

\newcommand{\F}{\mathbb{F}}

\mathchardef\mhyphen="2D
\newcommand{\msf}[1]{\mathsf{#1}}

\newcommand{\bs}[1]{\boldsymbol{#1}}
\newcommand{\globalm}{\boldsymbol{\mathcal{W}}}
\newcommand{\interout}{\boldsymbol{H}}

\newcommand{\PartySet}[1]{\mathcal{P}_{#1}}
\newcommand{\pclients}{\PartySet{C}}
\newcommand{\pservers}{\PartySet{S}}
\newcommand{\cl}[1]{E_{#1}}

\newcommand{\ash}[1]{\llbracket #1\rrbracket} 

\newcommand{\Sim}{\mathcal{S}}
\newcommand{\Adv}{\mathcal{A}}

\newcommand{\Prot}[1]{\ensuremath{\Pi_{\mathsf{#1}}}}
\newcommand{\PFirstA}{\Prot{Global \mhyphen Shuffle}}
\newcommand{\PFirstB}{\Prot{Global \mhyphen BMF}}
\newcommand{\PSecond}{\Prot{Global \mhyphen Local \mhyphen BMF}}
\newcommand{\PMult}{\Prot{Mult}}
\newcommand{\PAct}{\Prot{Act}}
\newcommand{\PDiv}{\Prot{Div}}
\newcommand{\PDP}{\Prot{DP}}
\newcommand{\PGS}{\Prot{GS}}
\newcommand{\PNM}{\Prot{Norm}}

\newcommand{\SHU}{\Prot{Shuffle}}

\newcommand{\FWD}{\Prot{FWD}}
\newcommand{\BWD}{\Prot{BWD}}

\newcommand{\appref}[1]{Appendix~\ref{app:#1}}

\newcommand{\figref}[1]{Fig.~\ref{fig:#1}}

\newcommand{\secref}[1]{Section~\ref{sec:#1}}

\newcommand{\figlab}[1]{\label{fig:#1}}

\usepackage{tcolorbox}
\tcbuselibrary{skins,breakable}

\newtcolorbox{protobox}[2][]{%
  enhanced,
  title        = {#2},
  attach boxed title to top left={xshift=+3mm,yshift*=-3mm},
  breakable    = false,
  colback      = white, 
  colframe     = black!75,
  fonttitle    = \bfseries,
  colbacktitle = black!10!white,
  coltitle     = black,
  #1
}

\newenvironment{protofig}[3]{
  \begin{figure}[!h]
    \newcommand{\FigCaption}{#2}
    \newcommand{\FigLabel}{#3}
    \begin{protobox}{#1}
    }{%
    \end{protobox}
    \vspace{-1em}
    \caption{\FigCaption}
    \figlab{\FigLabel}
  \end{figure}
}

\begin{document}
\title{Secure and Privacy-Preserving Vertical Federated Learning}

\author{Shan Jin}
\affiliation{%
  \institution{Visa Research}
  \city{} 
  \state{} 
  \country{} 
}
\email{shajin@visa.com}

\author{Sai Rahul Rachuri}
\affiliation{%
  \institution{Visa Research}
  \city{}
  \state{}
  \country{}}
\email{srachuri@visa.com}

\author{Yizhen Wang}
\affiliation{%
  \institution{Visa Research}
  \city{}
  \state{}
  \country{}
}
\email{yizhewan@visa.com}

\author{Anderson C.A. Nascimento}
\affiliation{%
  \institution{Visa Research}
  \city{}
  \state{}
  \country{}}
\email{annascim@visa.com}

\author{Yiwei Cai}
\affiliation{%
 \institution{Visa Research}
 \city{}
 \state{}
 \country{}}
\email{yicai@visa.com}

\renewcommand{\shortauthors}{Jin et al.}

\begin{abstract}
We propose a novel end-to-end privacy-preserving framework, instantiated by three efficient protocols for different deployment scenarios, covering both input and output privacy, for the vertically split scenario in federated learning (FL), where features are split across clients and labels are not shared by all parties. We do so by distributing the role of the aggregator in FL into multiple servers and having them run secure multiparty computation (MPC) protocols to perform model and feature aggregation and apply differential privacy~(DP) to the final released model. While a naive solution would have the clients delegating the entirety of training to run in MPC between the servers, our optimized solution, which supports purely global and also global-local models updates with privacy-preserving, drastically reduces the amount of computation and communication performed using multiparty computation. The experimental results also show the effectiveness of our protocols.  
\end{abstract}
\keywords{Vertical federated learning, differential privacy, secure multiparty computation}

\maketitle
\footnotetext{
This is a preprint of a paper accepted for publication in
Proceedings on Privacy Enhancing Technologies (PoPETs), 2026.
The final version will be available via the official PoPETs website.
}
\section{Introduction}


Federated Learning~\cite{mcmahan2017fl} is a powerful machine learning paradigm over distributed data sources. It allows multiple data sources, often dubbed as the clients, to collaboratively learn an ML model that outperforms each client's training on its own data. 
Privacy of clients' sensitive and proprietary data has been a core motivation of FL since its genesis. 
Recent advances in privacy-preserving federated learning (PPFL) have shown mechanisms that provably enhance data privacy during both communication and computation~\cite{yin2021survey}. 
In particular, secure aggregation~\cite{Bonawitz2017practical} and multiparty computation~\cite{cramer2015secure} {guarantee} that no party can infer any information about the clients' model updates from intermediate values during model aggregation, whereas differential privacy-preserving mechanisms~\cite{Dwork2006differential,dwork2014algorithmic} ensure the amount of information that can be inferred from the aggregated model is limited by a provable guarantee. 
These techniques provably enhance the privacy of FL in the classical FL regime, in which data are split by rows across the clients and the clients' model updates are homogeneous in format.

However, vertical federated learning~(VFL)--a {challenging} FL setting that attracts {an} increasing amount of attention lately~\cite{Liu2024vertical}--{compels us to rethink} the design of PPFL. 
In VFL, the data, including the label, can be vertically split among clients. Each client may have a unique view of the {dataset} that consists of only a subset of the columns.  
For example, banks and credit card companies may want to collaborate against fraud, but typically bank account information and transaction history of the credit card are not {aligned} in the feature space. Taking advertisement as another example:~Normally, {an} advertising platform may know the {browsing} history and interests of the user, but only the shopping platform knows whether the click has converted to an actual purchase.
The ubiquity of VFL in real-world applications calls for a VFL-friendly PPFL mechanism.

Unfortunately, existing PPFL mechanisms~
\cite{Bonawitz2017practical,Fereidooni2021safe} do not apply to VFL.
In particular, with each client only holding a partial view of the data, no client can compute the model output or the model update on its own. The flow of backpropagation is disrupted. Moreover, the client and the server may no longer have the same model architecture, i.e., the global model may not be a simple aggregation, e.g., averaging, of the {clients'} models. Instead, 
the entire model is often split into separate models between clients and the server. In a general VFL framework~\cite{Liu2024vertical}, a client model extracts {an} embedding from the client's private data as the intermediate output; the global model at the server then {proceeds} with the intermediate outputs from all clients to complete the learning task. 
Although delegating all computation to a secure MPC environment plus applying DP noise during training can ensure both input and output privacy, such a naive combination will be computationally prohibitive, especially when training deep neural {networks}. Previous work in~\cite{pentyala2022training} applied the same idea but restricted it {to} training a generalized linear model only. 
Besides, the work in~\cite{qiu2023efficient,qiu2024secure} designed a framework for training VFL securely based on secure aggregation~\cite{Bonawitz2017practical}. However, this framework did not provide any DP protection on each client's private data, which could easily leak information under membership inference attacks~(MIAs)~\cite{shokri2017membership,nasr2019inference,ye2022enhanced}. Most recently, a federated transformer framework for fuzzy VFL {was} proposed in~\cite{wu2024federated}. This work focuses on a special variant of conventional VFL, fuzzy VFL, which enables \emph{one-to-many} record linkage across clients. 
While this framework preserves both input and output privacy, it represents only a restricted case in VFL. In contrast, our work aims to develop a privacy-preserving protocol for the more general VFL setting. 

In response to {these challenges}, we propose a novel practical privacy-preserving vertical federated learning mechanism with privacy guarantees on both input and output privacy. 
Our main contributions are {as follows}:
\begin{enumerate}
    \item 
    We show an efficient {differentially} private VFL mechanism that couples secure aggregation with {the} Gaussian mechanism and {the} Matrix Factorization~(MF)-based mechanism to train a general \emph{global} model given pre-trained, frozen local models at the clients. (Section~\ref{sec:mpc_dp_localfreez}.)  
    \item 
    We extend the VFL mechanism to allow practical privacy-preserving updating of local layers within \emph{local} models as needed without adding DP noises to multiple local layers. To achieve {this}, we leverage linear estimation techniques to extract insights from the privacy-preserving information published by the server, enabling backward propagation on local models, under {the MF-based} mechanism. (Section~\ref{sec:mpc_dp_gl_update}.)

    
    \item We empirically evaluate both of our mechanisms on CIFAR-10/EMNIST. 
    Our {mechanisms show} competitive utility on both datasets across a wide range of privacy parameter $\varepsilon$s. (Section~\ref{sec:experiments}.)
\end{enumerate}
Our mechanism supports the general VFL pipeline in~\cite{Liu2024vertical}, which allows {a} flexible split of {the} global and local {models}. On a high level, our mechanism securely computes the global model updates and releases the global model with DP protection. This design makes the computation of the global model a choke-point for privacy, and as a result, {computation involving only local models at the clients} can be done in plaintext, which significantly reduces the computation overhead compared to running MPC end-to-end. This improvement enables training of complex architectures like ResNet-18, which was previously infeasible with end-to-end MPC. Furthermore, we leverage linear estimation techniques to recover the desired per-sample gradients from the received DP-protected release of batch gradient. By the post-processing property of DP, the estimated per-sample gradients are also DP protected, which can then be used by the clients {in the clear to perform backpropagation on} local models. 

\section{Related Work}

Various previous works have acknowledged and attempted to address the {challenge} of preserving privacy in VFL. On one hand, multiple encryption-based methods~\cite{fu22blind,huang2023efmvfl,mugunthan2021multivfl,qiu2023efficient,qiu2024secure} have been proposed to preserve input privacy in computation. However, this line of work does not prevent output privacy leakage, such as membership inference {attacks} against the released model. On the other hand, several existing mechanisms~\cite{vepakomma20202nopeek,wang2020hybrid} attempt to enhance output privacy by perturbing the intermediate results. However, these mechanisms either target relatively simple model architectures or lack rigorous privacy guarantees. {In~\cite{xie2024admm}, the authors present a differentially private VFL framework with multiple server heads, allowing clients to conduct multiple local updates per communication round through an ADMM~(Alternating Direction Method of Multipliers)-based optimization scheme. This approach achieves remarkable communication efficiency and also a better privacy–utility tradeoff compared to baseline methods such as split learning~\cite{vepakomma2018splitlearning}. However,  
ADMM’s sophisticated coordination mechanism requires extensive hyperparameter tuning, making it highly sensitive to practical deployment settings. For example, even a slight change in the penalty parameter $\rho$ can lead to significant performance variations. More importantly, the subsampling mechanism applied in this approach still requires synchronized randomness across all clients to maintain vertical data alignment, which breaks the anonymity of the sampling process and may lead to a privacy leakage vector. Recently, \cite{gai2025differentially} proposes a differentially private VFL framework that introduces adaptive constraints and dynamic noise injection to better balance the privacy–utility tradeoff. However, the framework still incurs performance degradation on a complex dataset~(CIFAR-10) under stringent privacy budgets, and the added adaptive mechanisms including per-iteration Laplacian score calculation also increase tuning overhead. In contrast, our work systematically studies the possibility of combining MPC and provable DP to enjoy the benefit of both worlds for achieving a better utility-privacy trade-off. }

In terms of combining MPC and DP for VFL, the most notable and the closest to ours in spirit is~\cite{wu2024federated}, which enhances both input and output privacy in fuzzy-linked VFL by applying local differential privacy~(LDP) to the local intermediate representations and aggregating them securely via MPC. This one-to-many fuzzy linkage enables subsampling for privacy amplification, but is not feasible in conventional one-to-one linkage VFL.  
Compared to~\cite{wu2024federated} which {is particularly suited to} VFL with {fuzzy-link} tasks, we propose a framework for {the} general VFL setting with careful consideration {of} all surfaces of privacy leakage including model {release}, intermediate output exchange, and data {alignment}. 

Additional references not directly related to our work are provided in~\appref{related_work}, for readers seeking additional background.

\section{Preliminaries}




\paragraph{Differential Privacy}



{The notion of DP was first introduced in~\cite{dwork2006diff}.  
Intuitively, differential privacy ensures that the participation of any individual has limited influence on the released output, thereby preventing adversaries, regardless of their auxiliary information, from reliably inferring whether a particular individual’s data was included in the dataset. 
In addition, DP satisfies a desirable post-processing property: The privacy guarantees are preserved, no matter how the output is later used or manipulated~\cite{dwork2014algorithmic}.
DP is particularly pertinent to ML because ML models, which are the outputs of learning algorithms, will be released or heavily queried at usage. Existing attacks already show extraction of input information, i.e., the training data, from unguarded models~\cite{shokri2017membership,nasr2019inference,ye2022enhanced}.
In this paper, we leverage multiple differentially private mechanisms, including the Gaussian mechanism via the moments accountant proposed in~\cite{abadi2016privacy}, as well as the Matrix Factorization mechanism~\cite{ChoquetteChoo2023multi,Kalinin2024banded}. The basic concepts and formal definitions of these mechanisms are summarized in~\appref{dp}.}


{\paragraph{Multiparty Computation} 
MPC allows mutually distrusting parties to jointly compute a function on their private data without revealing {the data}. It has found wide application, including PPML, and provides two core guarantees: \emph{correctness} (the output is correct) and \emph{privacy} (no information beyond the output is learned). Our work assumes a set of servers that can run MPC with flexible configurations (e.g., number of servers, corruption threshold). 
We focus on passive security, but by replacing the underlying components with actively secure protocols, one could achieve active security.
We leverage MPC protocols to 1) securely compute the model updates without revealing to a centralized server, and 2) re-enable classical DP mechanisms in a decentralized setting in the absence of a trusted curator.
In our protocols, data is secret-shared among servers using a linear scheme (e.g., arithmetic or replicated secret-sharing), enabling local additions and secure multiplications with minimal communication. We use the notation $\ash{\cdot}$ to indicate a linear {secret-sharing} scheme. We also use recent advances allowing mixed-mode computation, which {include} specialized protocols for non-linear activation functions, secure shuffles, and Gaussian noise sampling. For further details on MPC and the secret-sharing scheme, see~\appref{mpc}.}

\section{Problem Setup}
\label{sec:setup}

\subsection{Notation}


Typically, in a FL system, there is a central server (sometimes multiple servers) {that} seeks to train a model based on private data from $N$ clients. 
Each client $E_i$ has its own private dataset $\mathcal{D}_i$ and we use ${\mathcal{D} = \mathcal{D}_1 \bigcup  \mathcal{D}_2 \dots \bigcup \mathcal{D}_N}$ to denote the global dataset space. ${\mathcal{D}}$ contains the feature space ${\mathcal{X}}$, the label space ${\mathcal{Y}}${,} which contains $S$ classes, and the sample ID space ${\mathcal{I}}$. Hence the dataset owned by the $i$-th client is denoted by~$\mathcal{D}_i \triangleq: (\mathcal{X}_i, \mathcal{Y}_i, \mathcal{I}_i)$.

In this paper, we consider the particular scenario of vertical federated learning, where the clients hold a disparate set of features that come from the same sample ID space~(henceforth, we omit $\mathcal{I}$). More specifically, the feature space is divided into a finite number of sub-spaces with $\mathcal{X} = \mathcal{X}_1 \bigcup \mathcal{X}_2 \dots \bigcup \mathcal{X}_N$ where each client, $\cl{i}$, holds the sub-space $\mathcal{X}_i$.
We assume there are $M$ samples in the global dataset, $\mathcal{D} \triangleq: \{\boldsymbol{x}_m, {y}_m \}^M_{m = 1}$ where 
$\vec{x}_m \in \mathbb{R}^{d_{\vec{x}}}$ is {the feature vector of each sample, with} feature dimension $d_{\vec{x}}$. Note the feature dimension $d_{\vec{x}}$ typically refers to the size after flattening and does not affect the input shape fed into the model. Now {each} client $E_i$ only holds a subset of the features in $\vec{x}_m$, with feature dimension $d_{\vec{x}^{(i)}}$, such that $d_{\vec{x}} = \sum_{i = 1}^N d_{\vec{x}^{(i)}}$. Therefore, we denote $\{\vec{x}^{(i)}_{m} \in \mathbb{R}^{d_{\vec{x}^{(i)}}}\}^M_{m=1}$ as the samples held by the $i$-th client where $\boldsymbol{x}^{(i)}_{m}  \subseteq \mathcal{X}_i$ with $m \in [1, M]$
and hence $\vec{x}_m = \vec{x}^{(1)}_{m} \bigcup\vec{x}^{(2)}_{m}\dots \bigcup\vec{x}^{(N)}_{m}$.


In VFL we distinguish between two kinds of clients. One is called the \emph{label client}~(or \emph{active client}), which holds the samples with a subset {of} features as well as the labels $\{y_m\}^M_{m = 1}$. The other clients, which we call \emph{feature clients}~(or \emph{passive clients}), only hold a subset of the features for their corresponding samples. For simplicity, we assume there is only one label party, which is the $N$-th client $E_N$.

\subsection{System Model} We denote $\boldsymbol{\Theta}$ as the full model in VFL. We can decompose $\boldsymbol{\Theta}$ into a global model $\globalm$ parameterized by $\boldsymbol{\theta}$, and each client $E_i$ holds a local model $\boldsymbol{\mathcal{F}}_i$ parameterized by $\boldsymbol{\phi}_i$. The last client $E_N$ holds the labels $\vec y$, in addition. The loss function for $j$-th sample is defined as
\begin{align}
    \min_{\boldsymbol{\Theta}}{\mathcal{L}_j(\boldsymbol{\Theta}; \boldsymbol{x}_{j}, \boldsymbol{y}_{j})} = l\left(\globalm(\boldsymbol{\theta}; \{\boldsymbol{\mathcal{F}}_i(\boldsymbol{\phi}_i; \boldsymbol{x}^{(i)}_{j})\}_{i = 1}^N ), {y}_j \right)~~, \nonumber\label{eq:1}
\end{align}
where $l(\cdot)$ denotes the task loss. Since typically the training dataset is divided at {the} batch level, we set $B$ as the number of samples in a mini-batch and we assume $B_{\emph{num}}: = M/B $ batches are used in each epoch~(for simplicity, here we assume $M$ is divisible by $B$). We also denote the total number of epochs as~$E_\emph{num} $, and therefore the total number of steps $T_\emph{num} $ is obtained through $T_\emph{num}= B_\emph{num} \cdot E_\emph{num} $. 

Typically, during model training, at the training step $t$, the index of the batch used in this step, is denoted by $b$ where $b:= t\mod{B_\emph{num}} $. Hence we define the samples of $b$-th batch at $i$-th client as $\textbf{x}^{(i)}_b = \{\boldsymbol{x}^{(i)}_{b, j}\}^{B}_{j=1}$ with $b \in [1, M/B ]$, where $\boldsymbol{x}^{(i)}_{b, j}$ is the $j$-th sample of $b$-th batch at $i$-th client. Similarly, $\textbf{y}_b = \{{y}_{b, j}\}^{B}_{j=1}$ are the labels for the $b$-th batch. 

Then at the beginning of each step, each client $E_i$ feeds its own batch into the local model $\boldsymbol{\phi}_i$ and obtains the intermediate output: $\textbf{H}^{(i)}_b = \{ \interout^{(i)}_{b, 1};\dots;\interout^{(i)}_{b, B}\}\in \mathbb{R}^{ B \times d_{\interout^{(i)}}}$ where $\interout^{(i)}_{b, j} = \boldsymbol{\mathcal{F}}_i(\boldsymbol{\phi}_i; \boldsymbol{x}^{(i)}_{b, j}) \in \mathbb{R}^{d_{\interout^{(i)}}}$ with $j \in [1, B]$ and $d_{\interout^{(i)}}$ is the feature dimension of the intermediate output at $i$-th client. Then each $E_i$ uploads its intermediate output $\textbf{H}^{(i)}_b $ to the server, where the server concatenates {the} intermediate outputs from all clients to construct $\textbf{H}_b = \{\textbf{H}^{(1)}_b, \dots, \textbf{H}^{(N)}_b \} \in \mathbb{R}^{B \times d_{\textbf{H}} }$ where $d_{\textbf{H}} = \sum^N_{i = 1}d_{\interout^{(i)}}$. Eventually, {the} server feeds $\textbf{H}_b$ into the global model $\boldsymbol{\theta}$ and obtains the logits, $\boldsymbol{Z} \in \mathbb{R}^{B \times S}$, as the final outputs. For ease of reference, the notations introduced above are summarized in Table~\ref{tab:notation} in~\appref{add}.

\subsection{Security and Privacy Definitions}
\label{sec:security_definition}

\subsubsection{Data Alignment} We assume that the data instances (features and labels) among the clients are aligned:~data linked from the same sample ID space, also called \emph{one-to-one linkage}. 
Since we are in the cross-silo setting, 
for privacy protection purpose, 
the alignment can be performed through a privacy-preserving data alignment protocol such as Private Set Intersection~(PSI)~\cite{pinkas2018psi,CCS:KMPRT17}. 
This is inherent to any vertically split FL protocol, and we see the data alignment phase as orthogonal to the problem addressed in this paper.

{Note that we assume all clients are semi-honest, as defined in Section~\ref{threat_model}. Under this threat model, both entity alignment and data participation are strictly determined by the protocol specification. }

{\subsubsection{Threat Model}\label{threat_model}

\paragraph{Trust Assumptions}~We consider $K$ servers $\pservers = \{P_k\}^K_{k=1}$ (instantiated with $K\!=\!3$ in our implementation) connected via pairwise authenticated channels. We follow the three-party setting as it strikes a practical balance: it provides separation of trust among independently managed servers while keeping communication costs low, and it has been adopted in real deployments~\cite{CCS:AFLNO16}. A \emph{static}, \emph{semi-honest} adversary~$\Adv$ corrupts at most $t < K/2$ servers before protocol execution begins (i.e., $t\!=\!1$ when $K\!=\!3$);~the corrupted server(s) follows the protocol honestly but $\Adv$ observes its complete view (received messages from other parties and clients, randomness, and state). We adopt the semi-honest model (passive) because our cross-silo deployment targets servers operated by distinct organizations, where the reputational and legal cost of active deviation outweighs potential gains~\cite{matthew2025covert}. However, our framework is not tied to a specific MPC protocol. Since all our protocols only make \emph{black-box} use of standard MPC building blocks such as multiplication, secure comparison, etc., replacing them with maliciously (actively) secure counterparts in the UC-secure model, yields active security.

We assume all $N$ clients are semi-honest:~They correctly follow the defined protocols but may attempt to infer other clients’ and servers’ private information, including data features and labels, from the information exchanged between clients and servers during the protocol execution. We further consider external adversaries that can observe the prediction outputs and  gradients released during training, and query the deployed model after training, with the goal of inferring clients’ private information through membership inference attacks~\cite{shokri2017membership}. 

\paragraph{Input Privacy} Our framework employs MPC to provide input privacy. Each client secret-shares its data among multiple servers, and our protocols guarantee that \emph{no} intermediate values are revealed to any individual server under the semi-honest, honest-majority assumption~(Theorem~\ref{thm:input_privacy_main}).

\paragraph{Output Privacy} In our framework, we say the system provides $(\epsilon, \delta)$-output privacy if \emph{all} information released by the servers is $(\epsilon, \delta)$ differentially private~(\ref{def:dp}), with respect to each client $E_i$'s dataset $\mathcal{D}_i$, for all $i \in [1, N]$~(Section~\ref{subsec:outputprivacy}). 
}


\section{Single-Server Plaintext Model Training}
\label{sec:plian}


Before detailing our proposed protocols, we first introduce the basic VFL framework under a single server~\cite{Liu2024vertical} without any privacy protection, i.e. the plaintext model. This introduction will familiarize the readers with {the training flow of} a typical VFL framework while also {illustrating} the privacy challenges. 

\paragraph{Global Model Update}~{In the forward pass}, each client computes its local representation $\textbf{H}^{(i)}_b$ and {passes} to the server. The server concatenates the representations to $\textbf{H}_b$ and computes the logits $\boldsymbol{Z} = \globalm(\boldsymbol{\theta}; \textbf{H}_b)$ and subsequently the loss $\mathcal{L}$.
During back-propagation, the server computes the per-sample gradients for the global model, here the gradient of the $j$-th sample is represented by
\begin{align}
\boldsymbol{g}_{\boldsymbol{\theta}, j} = \frac{\partial \mathcal{L}_j}{\partial \boldsymbol{\theta}} = \frac{\partial l(\sigma(\boldsymbol{Z}_j),y_{b,j})}{\partial \boldsymbol{\theta}}~~, 
\end{align}
where $\sigma$ is the activation function and $\boldsymbol{Z}_j$ are the logits produced by the $j$-th sample. The aggregated per-batch gradient is~$\boldsymbol{g}_{\boldsymbol{\theta}} = \frac{1}{B}\sum^B_{j=1} \boldsymbol{g}_{\boldsymbol{\theta}, j}$.
Once the gradient is obtained, the server can update the global model using first-order optimization algorithms. For example, in the classic stochastic gradient descent~(SGD), a new global model $\boldsymbol{\theta}^{t+1}$ can be obtained as $\boldsymbol{\theta}^{t+1} = \boldsymbol{\theta}^{t} - \eta_s \boldsymbol{g}_{\boldsymbol{\theta}}$, where $\eta_s$ is the learning rate and $t$ is the training step. 

\paragraph{Local Models Update}~Updating the local model is slightly more involved than {updating} the global model but still achievable via collaboration between the server and the clients. By chain-rule, the per-sample gradient of the local model can be written as
\begin{align}
 \boldsymbol{g}_{\boldsymbol{\phi}_{i}, j} = \frac{\partial  \mathcal{L}_j}{\partial \boldsymbol{\phi}_i}= \frac{\partial {l}(\sigma(\boldsymbol{Z}_j), y_{b,j})}{\partial  \interout^{(i)}_{b,j}} \cdot \frac{\partial  \interout^{(i)}_{b,j}}{\partial \boldsymbol{\phi}_i}
\end{align}
The first term, $\partial l(\sigma(\boldsymbol{Z}_j), y_{b,j})/\partial  \interout^{(i)}_{b,j}$, is the gradient of the loss~w.r.t. (with respect to) the client's output at $j$-th sample. Note {that} the server can compute the per-sample gradients for each client's intermediate output and then {broadcast} the results back to the {respective} clients. The second term, $\partial \interout^{(i)}_{b,j}/\partial \boldsymbol{\phi}_i$, can be computed locally by the client. Then, each client can combine the two terms to recover the per-sample gradient of its local model and {update} the model using SGD through $\boldsymbol{\phi}_{i}^{t+1} = \boldsymbol{\phi}_{i}^t - \eta_i \boldsymbol{g}_{\boldsymbol{\phi}_{i}}$, where $\boldsymbol{g}_{\boldsymbol{\phi}_{i}} = \frac{1}{B}\sum^B_{j = 1} \boldsymbol{g}_{\boldsymbol{\phi}_{i}, j}$ and $\eta_i$ is the learning rate at the $i$-th client. 

The plaintext training protocol protects neither the input nor the output privacy. On the one hand, the intermediate outputs $\textbf{H}_b$ uploaded from the clients are completely disclosed to the server. On the other hand, the trained global model and the per-sample gradient with respect to the intermediate output~($\partial \mathcal{L}_j/\partial  \interout^{(i)}_{b,j}$) are disclosed without any protection:~Such disclosure will leak information {about} each client's private data to others as studied in previous research~\cite{zhu2019leakage,jin2021cafe,fu2022label}. These privacy challenges motivate us to design practical {secure} and privacy-preserving protocols for VFL.

\section{Our 
:~MPC-aided Vertical FL Training with DP}
\label{sec:mpc_dp}

In this section, we introduce our VFL training protocols that {guarantee} both input and output privacy. We consider a two-stage training scheme: a client-held section~(local model) and a server-held section~(global model) maintained in secret-shared form.
We use the global model as a choke-point for privacy. By applying MPC on the global model with DP protection instead of applying MPC end-to-end, our design significantly reduces MPC usage, which substantially improves the training time compared to na\"ive MPC \emph{end-to-end} approaches. Also, this improvement enables training of complex architectures, which was previously infeasible with end-to-end MPC.
In Section~\ref{sec:mpc_dp_localfreez}, we introduce our protocol with frozen local models. In Section~\ref{sec:mpc_dp_gl_update}, we show an extension that allows the clients to update their local models with privacy {guarantees} too.
Before proceeding to our protocols, we give a primer {on} the MPC sub-protocols and blocks used to build our protocols.


\paragraph{Sub-protocols for MPC}~
Specifically, we use the protocol from \cite{CCS:AFLNO16} as the basis to instantiate the 3PC building blocks of secret-sharing, reconstruction, and multiplication~$\PMult$. To instantiate the shuffling $\SHU$, we use the protocol from \cite{CCS:AHIKNPTT22}. We use the protocols from \cite{ACNS:AlySma19} for mathematical functions over fixed-point numbers such as approximated square root, exponentiation, and division, using which we build the $\ell_2$-normalization protocol $\PNM$ and the division protocol $\PDiv$. For more details about how these building blocks are implemented in MP-SPDZ, we refer the reader to \cite{CCS:Keller20}. Moreover, we build the forward-passing module $\FWD$ and the backward-propagation module $\BWD$ in MP-SPDZ, which are the wrap-ups of the sub-protocols including $\PMult$ and activation function $\PAct$ {introduced in}~\cite{payman2018aby3}, for conducting the forward-passing and backward-propagation.

The Gaussian noise protocol, \PGS, could be implemented from a number of existing protocols such as~\cite{keller2024secure,FC:EIKN21,CCS:ChasheUll19,das2025communication}. Given input $(T_\emph{num}, d, \sigma)$, where $d$ is the length of the noise vector, and $\sigma$ is the standard deviation of the {noise}, a secret-shared version of noise matrix $\mathcal{N}_{\text{Tab}} \in \mathbb{R}^{T_{num} \times d}$, {where} each noise vector follows the normal distribution of $\mathcal{N}(0, \sigma^2\boldsymbol{I}_d)$, 
is obtained among the servers. Here $I$ is an identity matrix. As the noise is not input data-dependent, we assume the servers compute the noise matrix in the pre-processing phase, before the start of the training phase.

\subsection{Training a Global model with DP}
\label{sec:mpc_dp_localfreez}

{We first consider the scenario in which the clients have pre-trained local models that will be frozen during the training phase; only the global model, $\globalm$, will be updated. Typically, to achieve output privacy guarantees, the servers can apply the DP-SGD algorithm~\cite{abadi2016privacy} during global model training. In practice, privacy amplification techniques, such as subsampling~\cite{abadi2016privacy}, are often employed to improve the privacy–utility tradeoff. These techniques require each mini-batch to be sampled uniformly at random from the training dataset, with the sampling permutation kept hidden from all parties; that is, no party has knowledge of the permutation.

However, this requirement poses a challenge in vertical federated learning:~As data holders, once data alignment is completed, all clients share knowledge of the dataset ordering. Consequently, under the threat model defined in~Section~\ref{threat_model}, mini-batch permutations cannot be privately generated and distributed to clients, neither directly by the servers nor through a joint agreement among clients that is subsequently revealed, without compromising the anonymity of sample participation.

Therefore, we present two variants of our protocol in order to optimize the privacy-utility trade-off under various privacy and computation constraints. 
Our first protocol utilizes the Gaussian mechanism from~\cite{abadi2016privacy} together with privacy amplification via subsampling~\cite{wang2019subsampled}. We use a secure shuffling procedure to enable subsampling without replacement and therefore name the protocol $\PFirstA$. 
Our second protocol targets the case where private shuffling of the data is computationally prohibitive. By utilizing the Banded Matrix Factorization approach from~\cite{Kalinin2024banded}, our second protocol, $\PFirstB$, can achieve comparable privacy-utility trade-off without relying on amplification by subsampling.

\subsubsection{Protocol with Amplification by Subsampling}
\label{sec:mpc_dp_localfreez_shuffle}
As mentioned above, to ensure anonymity of sample participation, if one intends to apply subsampling, the permutations cannot be directly provided to clients, neither given by the servers nor through an agreement among the clients themselves. Towards this, we propose an efficient protocol to achieve privacy amplification through subsampling that is also compatible with the VFL framework. Our full protocol consists of three main components: 1) forward-passing and secret-sharing on intermediate outputs, 2) secure computation of gradients for the global model, 3) privacy-preserving noise addition for the securely computed global model gradients and model update. }

\paragraph{Forward-passing and Secret-sharing}~During local forward passing, each client $E_i$ feeds all $M/B$ batches of its local data into the local model $\boldsymbol{\mathcal{F}}_i$, and obtains the intermediate outputs $\textbf{H}^{(i)}_b = \{ \interout^{(i)}_{b, 1};\dots;\interout^{(i)}_{b, B}\} $ for $b \in [1, M/B]$.
Notice that each local model is a \emph{deterministic function} since {it is} frozen during training.  
Furthermore, the order of the samples in each client is fixed after data alignment. Therefore, for each client, the intermediate output, $\textbf{H}^{(i)}_b $ with $b \in [1, M/B]$, is a \emph{fixed mapping} to the local dataset. Hence it suffices for the clients to compute the forward pass on all the batches \emph{once} during the \emph{first} epoch and secret-share the intermediate output (and the labels) to the servers. No further \emph{upstream communications} are needed afterwards from the clients to the servers. The servers can store the intermediate outputs and reuse them for future training epochs, thereby reducing the communication overhead.

Thereafter, each client $E_i$ sends the secret-shares of {its} intermediate output $\textbf{H}^{(i)}_b$, which is denoted by $\ash{\textbf{H}^{(i)}_b}$ for $b \in [1, M/B]$, to the servers. The label client also secret-shares the labels $\textbf{y}_b$ as $\ash{\textbf{y}_b}$. {Upon receiving} all batches of secret-shares from each client, the servers \emph{vertically} concatenate the shares of the intermediate outputs from all clients for each batch respectively:
$\ash{{\textbf{H}}_b}=\{\ash{\textbf{H}^{(1)}_b},\dots, \ash{\textbf{H}^{(N)}_b}\}$ for $b \in [1, M/B]$ and then \emph{horizontally} concatenate all $\ash{{\textbf{H}}_b}$ into $\ash{\overline{\textbf{H}}} = \{\ash{{\textbf{H}}_1};\dots;\ash{{\textbf{H}}_{M/B}} \}$. 
All $\ash{\textbf{y}_b}$ are also \emph{horizontally} concatenated into $\ash{\overline{\textbf{y}}} = \{\ash{{\textbf{y}}_1};\dots;\ash{{\textbf{y}}_{M/B}} \}$.

\paragraph{Secure Gradients Computation for Global Model}~During global model training, at the beginning of \emph{every} epoch, the servers make a blackbox call to a \emph{secure shuffling} protocol, \SHU, on $\ash{\overline{\textbf{H}}}$ and $\ash{\overline{\textbf{y}}} $, to obliviously shuffle the rows and obtain $\ash{\overline{\textbf{H}}^*}$ and $\ash{\overline{\textbf{y}}^*}$. Both $\ash{\overline{\textbf{H}}^*}$ and $\ash{\overline{\textbf{y}}^*}$ are further divided into $M/B$ batches again, as $\ash{\overline{\textbf{H}}^*_b}$ and $\ash{\overline{\textbf{y}}^*_b}$ with $b \in [1, M/B]$ which are then \emph{sequentially} fed into the global model, one batch at a time. This shuffling procedure enables subsampling without replacement.

As a result, for \emph{each} {step} $t$~(note the corresponding batch index in the current epoch is given by $b:= t\mod{B_\emph{num}} $), the servers compute the logits $\ash{\boldsymbol{Z}}$ and hence the per-sample loss $\ash{ \mathcal{L}_j}$ with $j \in [1, B]$, by calling the forward-passing module \FWD. Thereafter, the per-sample gradients $\ash{\boldsymbol{g}_{\boldsymbol{\theta}, j}} = \ash{{\partial \mathcal{L}_j}/{\partial \boldsymbol{\theta} }}$ for the global model are computed through the back-propagation module \BWD. 

\paragraph{Noise Addition for Securely Computed Information} Once the per-sample gradients are computed, {they} are then clipped into 
\begin{align}
\ash{\overline{\boldsymbol{g}}_{\boldsymbol{\theta}, j}} = \left[\!\left[{\boldsymbol{g}_{\boldsymbol{\theta}, j}}/{\max(1, {\| \boldsymbol{g}_{\boldsymbol{\theta}, j}\|_2}/{\gamma} ) } \right]\!\right]~~, \label{clipping}
\end{align}
where the clipping threshold $\gamma$ is set to the maximum $\ell_2$ norm bound on each gradient. We use $\PNM, \PDiv$ to compute the norm and the clipping factors.
The clipped gradients in a mini-batch are aggregated locally by the servers into the aggregated gradient $\ash{\overline{\boldsymbol{g}}_{\boldsymbol{\theta}}} = \sum^B_{j = 1} \ash{\overline{\boldsymbol{g}}_{\boldsymbol{\theta}, j}}$.  
{Note in the} pre-processing phase, the servers run \PGS~with the input of $(T_\emph{num}, n_{\boldsymbol{\theta}}, \sigma \cdot\gamma)$, where $n_{\boldsymbol{\theta}}$ is the number of parameters in $\boldsymbol{\theta}$, to generate a secret-shared noise matrix $\ash{\boldsymbol{\mathcal{N}}_{\text{Tab}} }$. Then, at the step $t$, the noise vector $\ash{\boldsymbol{\mathcal{N}}_{\boldsymbol{\theta}}} = \ash{\boldsymbol{\mathcal{N}}_{\text{Tab}}}_{[t,:]}$ is added {to} the aggregated gradient, which results in the privatized gradient  
\[
\ash{\tilde{\boldsymbol{g}}_{\boldsymbol{\theta}}} = \frac{1}{B} (\ash{\overline{\boldsymbol{g}}_{\boldsymbol{\theta}}} + \ash{\boldsymbol{\mathcal{N}_{\boldsymbol{\theta}}}})~~.
\]
The servers update $\ash{\globalm}$ through gradient descent with gradient $ \ash{\tilde{\boldsymbol{g}}_{\theta}}$. The formal description of the protocol is shown in~\figref{pfirst1}.


\begin{protofig}{Protocol $\PFirstA$}{Training a Global model with Frozen Local models under Gaussian Mechanism with Shuffling}{pfirst1}

\textbf{Parameters:} Clients $\pclients = E_1, \ldots, E_N$, aggregation servers $\pservers = \{P_k\}^K_{k = 1}$. 

\textbf{Preprocessing:} The servers run $\PGS$ with input $(T_\emph{num}, n_{\boldsymbol{\theta}}, \sigma \gamma)$ to receive $\ash{\boldsymbol{\mathcal{N}}_{\text{Tab}} }$. \\ 

\textbf{Client-Server Phase:} Each client $E_i$ does the following:~Before the global model training starts, 
\begin{enumerate}
    \item Locally computes $\textbf{H}^{(i)}_b = \{ \interout^{(i)}_{b, 1};\dots;\interout^{(i)}_{b, B}\}$ where $\interout^{(i)}_{b, j} = \boldsymbol{\mathcal{F}}_i(\boldsymbol{\phi}_i; \boldsymbol{x}^{(i)}_{b, j})$, and secret-shares $\ash{\textbf{H}^{(i)}_b}$, for $b \in [1, M/B ]$, then sends all the batches to $\pservers$.
    \item The label client $E_N$ also secret-shares $\ash{\textbf{y}_b}$ for $b \in [1, M/B ]$ to $\pservers$. \\
\end{enumerate}

\textbf{Server MPC: }~At the beginning of each epoch, $\pservers$ do the following:
\begin{enumerate}
    \item Only if at the beginning of the \emph{first} epoch, vertically concatenate $\ash{\textbf{H}^{(i)}_b}$ received from $\pclients$ for $i \in [1, N]$, which results in $\ash{\textbf{H}_b}$ with $b \in [1, M/B ]$, and then horizontally concatenate $\ash{\textbf{H}_b}$ and $\ash{\textbf{y}_b}$, for $b \in [1, M/B ]$, to result in$\ash{\overline{\textbf{H}}}$ and $\ash{\overline{\textbf{y}}}$, respectively. 
    \item Run $\SHU$ on $\ash{\overline{\textbf{H}}}$ and $\ash{\overline{\textbf{y}}}$, to receive the sorted values $\ash{\overline{\textbf{H}}^{*}}$ and $\ash{\overline{\textbf{y}}^{*}}$ respectively. 
    \item Split $\ash{\overline{\textbf{H}}^{*}}$ and $\ash{\overline{\textbf{y}}^{*}}$ into $M/B$ mini-batches as $\ash{\overline{\textbf{H}}^{*}_b}$ and $\ash{\overline{\textbf{y}}^{*}_b}$ with $b \in [1, M/B]$. 
\end{enumerate}

For each step in the current epoch, $\pservers$ do the following:
\begin{enumerate}
    \item Compute per-sample loss $\ash{\mathcal{L}_j}$ for $j \in [1, B]$, by inputting $\ash{\overline{\textbf{H}}^{*}_b}$, $\ash{\globalm}$ and $\ash{\overline{\textbf{y}}^{*}_b}$ into $\FWD$.
    \item Compute $\ash{\boldsymbol{g}_{\boldsymbol{\theta}, j}} = \ash{\frac{\partial \mathcal{L}_j}{\partial \boldsymbol{\theta}}}$, for $j \in [1, B]$, by running $\BWD$.
    \item Compute $\ash{\gamma_j} = \msf{max}(1, \text{n}/\gamma)$, where $\ash{n} = \PNM(\ash{\boldsymbol{g}_{\boldsymbol{\theta}, j}})$, for $j \in [1, B]$.
    \item Finally, obtain $\ash{\overline{\boldsymbol{g}}_{\boldsymbol{\theta}, j}} = \ash{\frac{\boldsymbol{g}_{\boldsymbol{\theta}, j}}{\gamma_j}}$, by running $\PDiv$.
    \item Locally aggregate the gradients by $\ash{\overline{\boldsymbol{g}}_{\boldsymbol{\theta}}} = \sum^B_{j = 1} \ash{\overline{\boldsymbol{g}}_{\boldsymbol{\theta}, j}}$.
    \item Locally add noise by $\ash{ \tilde{\boldsymbol{g}}_{\theta}} = \frac{1}{B}(\ash{\overline{\boldsymbol{g}}_{\bs{\theta}}} + \ash{\boldsymbol{\mathcal{N}}_{\boldsymbol{\theta}}})$, where $\ash{\boldsymbol{\mathcal{N}}_{\boldsymbol{\theta}}} = \ash{\boldsymbol{\mathcal{N}}_{\text{Tab}}}_{[t,:]}$.
    \item Update $\ash{\globalm}$ via gradient descent on $\ash{ \tilde{\bs{g}}_{\theta}}$.
\end{enumerate}
    
\end{protofig}

\subsubsection{Protocol With Banded Matrix Factorization}
\label{sec:mpc_dp_localfreez_bmp}
In~\ref{sec:mpc_dp_localfreez_shuffle}, we design a protocol that achieves privacy amplification by subsampling via secure shuffling over the data secret-shared from the clients. However, for {large-scale} datasets, performing a secure shuffle for each epoch could take up a considerable amount of time. Thus, we propose an alternative protocol that does not rely on the subsampling but still achieves {a} competitive privacy-utility trade-off. This protocol reduces the communication and computation in the online phase by using a public correlation matrix. 
In particular, our protocol utilizes the Banded Matrix Factorization~(BandMF) mechanism~\cite{Kalinin2024banded}, which leverages \emph{banded correlation matrices} to generate \emph{correlated noise} to be applied during training.

We first follow the same concept of~$\text{b}$-\emph{min-separated participation} schema {proposed in}~\cite{Choquette-Choo2023amplified}. Under this schema, if a single item (such as a sample) contributed to a model gradient vector at step $t$, then the earliest it can contribute again is at step $t+\text{b}$. 
Our VFL training protocol naturally satisfies the schema:~The order of the samples at each client is fixed, and therefore the interval between two adjacent {participations} of any sample point is constant. 
In fact, our training protocol also satisfies~\emph{fixed-epoch-order participation} schema~(the $(\kappa, \text{b})$-\emph{participation})~\cite{ChoquetteChoo2023multi},  a stricter participation schema where each sample participates $\kappa$ times with the constraint that any two adjacent {participations} are exactly \text{b} steps apart.
{ Note we fix $\kappa = E_\emph{num} $ and $\text{b} = B_\emph{num}$ throughout this paper}.

In \emph{Matrix Factorization} mechanisms~\cite{denisov2022improved,ChoquetteChoo2023multi}, let $\boldsymbol{A} \in \mathbb{R}^{T_\emph{num} \times T_\emph{num}}$ be an appropriate \emph{workload matrix}, and $\boldsymbol{G}=\{\boldsymbol{g}^1_{{\boldsymbol{\theta}}};\dots;\boldsymbol{g}^{T_\emph{num}}_{\boldsymbol{\theta}} \} $ is a stream of model batch gradients. Here each batch gradient $\boldsymbol{g}^t_{\boldsymbol{\theta}}$ has a bounded $\ell_2$ norm, as $\| \boldsymbol{g}^t_{\boldsymbol{\theta}} \| \leq \gamma$. The matrix factorization on $\boldsymbol{A}$ is represented by $\boldsymbol{A}=\boldsymbol{B}_{\text{MF}}\boldsymbol{C}_{\text{MF}}$, which is used to privately estimate the quantity of $\boldsymbol{A}\boldsymbol{G}$, as   
\begin{align}
\widehat{\boldsymbol{A}\boldsymbol{G}}=\boldsymbol{B}_{\text{MF}}(\boldsymbol{C}_{\text{MF}}\boldsymbol{G} + \boldsymbol{Z}) = \boldsymbol{A}(\boldsymbol{G} + \boldsymbol{C}^{-1}_{\text{MF}}\boldsymbol{Z})~~, \label{eq:mf_eq}
\end{align}
where $\boldsymbol{Z}$ is an appropriately scaled Gaussian noise vector whose scale is determined by $\text{sens}(\boldsymbol{C}_{\text{MF}})$, which is the sensitivity of $\boldsymbol{C}_{\text{MF}}$ defined in Definition~1 in~\cite{ChoquetteChoo2023multi} and also summarized in \ref{def:sens_MF}. Typically, $\boldsymbol{C}_{\text{MF}}$ is also called a \emph{query matrix} or \emph{encoder} since it encodes $\boldsymbol{G} $ as $\boldsymbol{C}_{\text{MF}}\boldsymbol{G} $, which is expected to be made private. 

Recently, by utilizing the banded root square factorization~\cite{Kalinin2024banded}, particularly for {the} SGD algorithm, the SGD {workload matrix} $\boldsymbol{A}$ can be further constructed through $\boldsymbol{A} = \eta \boldsymbol{A}_{\alpha, \beta}$
where $\eta$ is the learning rate, and $\boldsymbol{A}_{\alpha, \beta}$ is a lower triangular Toeplitz-matrix. 
Based on Theorem~1 in~\cite{Kalinin2024banded}, given any value $p \in [1, T_\emph{num}]$,  the $p$-\emph{banded square-root}~(BSR) of $\boldsymbol{A}_{\alpha, \beta}$:~$C^{|p|}_{\alpha, \beta}$, is computed through
\begin{align}
C^{|p|}_{\alpha, \beta} = \text{LDToep}(c_0,\dots, c_{{T_\emph{num}}-1})~~, \nonumber 
\end{align}
where $c_j =  \sum^{j}_{i=0}\alpha^{j-i}r_{j-i}r_i\beta^{i}$ with $r_i = |\binom{-1/2}{i} | $ for $0<j<p$, 
$\alpha$ is the weight decay parameter and $\beta$ is the momentum strength. Note here $c_0 = 1$ and $c_j = 0$ when $j \geq p$. This $p$-BSR matrix serves as the {query matrix} for computing the correlated noise
and is treated as a public parameter, computed {during} the pre-processing phase and sent in plaintext to the servers.
We use $\boldsymbol{\Omega}$ to denote the $p$-BSR matrix for simplicity. Moreover, we set $p = \text{b}$ for repeated data participation. Under \text{b}-min-separated-participation, the necessary amount of noise depends on the \emph{sensitivity} of $\boldsymbol{\Omega}$, $\text{sens}_{\kappa,\text{b}}(\boldsymbol{\Omega})$, which is defined in Theorem~2 in~\cite{Kalinin2024banded} and also summarized in~\ref{def:sens}. 

After pre-processing, our protocol works {as follows}:~{Similar to} the protocol in~Section~\ref{sec:mpc_dp_localfreez_shuffle}, we still {describe the protocol using the same three main components}.

\paragraph{Forward-passing and Secret-sharing} Each client still passes all batches of local data to obtain the intermediate outputs $\textbf{H}^{(i)}_b$ for $b \in [1, M/B]$ at the \emph{beginning} and then secret-shares the intermediate outputs to the servers. Similarly, $\textbf{y}_b$ with $b \in [1, M/B]$ are also secret-shared to the servers. Since all local {models} are frozen, this upstream communications are still only done once before the \emph{first} epoch. {After receiving} all uploads, the servers vertically concatenate the shares of the intermediate outputs from all clients for every batch $b$ to obtain $\ash{{\textbf{H}}_b}$ with $b \in [1, M/B]$. Unlike~Section~\ref{sec:mpc_dp_localfreez_shuffle}, note here the servers \emph{will not} shuffle the horizontally concatenated version of $\ash{{\textbf{H}}_b}$ and $\ash{\textbf{y}_b}$ by $b \in [1, M/B]$, which are $\ash{\overline{\textbf{H}}}$ and $\ash{\overline{\textbf{y}}}$, but store $\ash{{\textbf{H}}_b}$ and $\ash{\textbf{y}_b}$ {in} their original order from $b =1~\text{to}~M/B$ to align with the participation pattern defined before. 

\paragraph{Secure Gradients Computation for Global Model} In the training phase, for \emph{each} {step} $t$, the servers feed $\ash{{\textbf{H}}_b}$ into the global model for conducting forward-passing through \FWD. Once the losses for all samples are computed, the back-propagation is conducted by \BWD, and the per-sample gradients for the global model $\ash{\boldsymbol{g}_{\boldsymbol{\theta}, j}} $ with $j \in [1, B]$ are computed. 

\paragraph{Noise Addition for Securely Computed Information} After clipping the per-sample gradients, which is conducted through Eq.~\eqref{clipping}, the aggregated gradient $\ash{\overline{\boldsymbol{g}}_{\boldsymbol{\theta}}} = \sum^B_{j = 1} \ash{\overline{\boldsymbol{g}}_{\boldsymbol{\theta}, j}}$ is obtained.  Unlike~Section~\ref{sec:mpc_dp_localfreez_shuffle}, the servers now add correlated {noise} to the aggregated gradient under {the} BandMF mechanism. During the pre-processing phase, the servers run the noise sampling protocol \PDP~with input $(T_\emph{num}, n_{\boldsymbol{\theta}}, \sigma \cdot \gamma \cdot \text{sens}_{\kappa,\text{b}}(\boldsymbol{\Omega}))$ to generate $\ash{\boldsymbol{\mathcal{N}}_{\text{Tab}}}$. Then, the servers compute $\ash{\boldsymbol{\mathcal{N}}_{\text{Corr}}}= \boldsymbol{\Omega}^{-1}\ash{\boldsymbol{\mathcal{N}}_{\text{Tab}}}$. 
Hence at the step $t$, the $t$-th row of $\ash{\boldsymbol{\mathcal{N}}_{\text{Corr}}}$ is added to the aggregated gradient, i.e., $\ash{\boldsymbol{\mathcal{N}}_{\boldsymbol{\theta}}} = \ash{\boldsymbol{\mathcal{N}}_{\text{Corr}}}_{[t,:]}$. Eventually, the servers compute the privatized gradient through $\ash{\tilde{\boldsymbol{g}}_{\boldsymbol{\theta}}} = \frac{1}{B} (\ash{\overline{\boldsymbol{g}}_{\boldsymbol{\theta}}} + \ash{\boldsymbol{\mathcal{N}_{\boldsymbol{\theta}}}})
$, which is then used to update the global model $\ash{\globalm}$. The formal description of this BandMF mechanism based protocol is shown in \figref{pfirst2}.

\begin{protofig}{Protocol $\PFirstB$}{Training a Global model with Frozen Local models under Banded Matrix Factorization Mechanism}{pfirst2}

\textbf{Parameters:} Clients $\pclients = E_1, \ldots, E_N$, aggregation servers $\pservers = \{P_k\}^K_{k = 1}$. \\
\textbf{Preprocessing:} The servers run $\PGS$ with $(T_\emph{num}, n_{\boldsymbol{\theta}}, \sigma \gamma \text{sens}_{\kappa,\text{b}}(\boldsymbol{\Omega}))$ to receive $\ash{\boldsymbol{\mathcal{N}}_{\text{Tab}} }$. The servers compute $\ash{\boldsymbol{\mathcal{N}}_{\text{Corr}}}=\boldsymbol{\Omega}^{-1}\ash{\boldsymbol{\mathcal{N}}_{\text{Tab}}}$. \\

\textbf{Client-Server Phase:} Proceeds the same steps as in \figref{pfirst1}, with the clients sending $\ash{\textbf{H}^{(i)}_b}$ and $\ash{\textbf{y}_b}$ for $b \in [1, M/B ]$ to $\pservers$. \\


\textbf{Server MPC:}~At the beginning of each epoch, $\pservers$ do the following:
\begin{enumerate}
    \item Only if at the beginning of the \emph{first} epoch, vertically concatenate $\ash{\textbf{H}^{(i)}_b}$ received from $\pclients$ for $i \in [1, N]$, denoted by $\ash{\textbf{H}_b}$ with $b \in [1, M/B ]$.
    \end{enumerate}
    
For each step in the current epoch, $\pservers$ do the following:
    \begin{enumerate}
    \item Given $\ash{{\textbf{H}}_b}$ and $\ash{\textbf{y}_b}$, follow the same steps from 1 to 5, as described in \figref{pfirst1}, to obtain $\ash{\overline{\boldsymbol{g}}_{\bs{\theta}}}$.
    \item  Locally add noise to the aggregated gradients as $\ash{ \tilde{\boldsymbol{g}}_{\theta}} = \frac{1}{B}(\overline{\boldsymbol{g}}_{\bs{\theta}} + \ash{\boldsymbol{\mathcal{N}}_{\boldsymbol{\theta}}})$, where $\ash{\boldsymbol{\mathcal{N}}_{\boldsymbol{\theta}}} = \ash{\boldsymbol{\mathcal{N}}_{\text{Corr}}}_{[t,:]}$.
    \item Update $\ash{\globalm}$ via gradient descent on $\ash{ \tilde{\bs{g}}_{\theta}}$.
\end{enumerate}
    
\end{protofig}

\subsection{Training a Global and the Local models with DP}
\label{sec:mpc_dp_gl_update}

The two protocols introduced in Section~\ref{sec:mpc_dp_localfreez} enable privacy-preserving training of the global model while freezing the local models. In practice, however, clients may also have the incentive to update their local models:~A client may not always have a good pre-trained model for all tasks to start with; meanwhile, clients may also want to gain some insights from participating {in} the VFL for future tasks. Also, as mentioned earlier, directly revealing per-sample gradients from servers would leak too much private information to the clients~\cite{zhu2019leakage}. These practical needs motivate us to design an alternative protocol that supports the update of local models while preserving privacy. 

However, na\"ively applying privacy-enhancing techniques to compute the chain-rule for local model updates in Section~\ref{sec:plian} may be infeasible. Notice that a client needs per-sample gradients of the loss w.r.t. its local model's output from the servers. If the servers publish the per-sample gradients and add DP-preserving noise to each gradient, the amount of noise added will be {so tremendously} large that the utility is diminished. On the other hand, if the client delegates the local per-sample gradient computation to the servers via MPC, the computation and communication overhead will scale with the size of the local model, which is impractical. 

In this section, we introduce a novel training protocol that bypasses the limitation of directly releasing {per-sample gradients}. The servers release privacy-budget friendly information and the clients can approximate the per-sample gradients from the released information. Note {that} we will still employ the BandMF mechanism~\cite{Kalinin2024banded}, as both upstream and downstream communications between the servers and clients are required at each step in this protocol. This enforces consistency between the information uploaded to the servers and the information received by the clients, ensuring that they correspond to the same mini-batch, which ultimately \emph{precludes} the anonymity of sample participation. We follow the same pre-processing procedure described in Section~\ref{sec:mpc_dp_localfreez_bmp} to predefine the participation pattern and compute the $p$-BSR matrix $\boldsymbol{\Omega}$.

{Unlike} the previous protocols, our full protocol consists of four main components: 1) forward-passing and secret-sharing for intermediate outputs and auxiliary information, 2) secure computation of gradients for the global model and selective layer(s) of local models, 3) privacy-preserving noise addition for the securely computed information including the global model update and auxiliary information for local gradients reconstruction, and 4) reconstruction of per-sample gradient and update of local models. 

\paragraph{Forward-passing and Secret-sharing} At each training step $t$, each client $E_i$, in addition to secret-sharing $\textbf{H}^{(i)}_b$ to the servers, also \emph{locally} computes per-sample gradients, ${\partial \interout^{(i)}_{b, j}}/{{\partial \boldsymbol{\phi}}^{L}_i}$, for $j \in [1, B]$, with respect to the model parameters ${\boldsymbol{\phi}}^{L}_i$ of a client chosen layer(s) $L$ of the local model, and secret-shares those gradients to the servers. These gradients, which {are} viewed as auxiliary information, will be further processed to construct an equation system for local model gradients estimation. 
Note {that} ${\boldsymbol{\phi}}^{L}_i$ is allowed to be different at each client. For example, one client could choose the last fully connected~(fc) layer, while another client could choose the last LoRA layer~\cite{hu2022lora}.
The servers then compute the loss $\ash{ \mathcal{L}_j}$ for $j \in [1, B]$, through the same steps in previous protocols. 

\paragraph{Secure Gradients Computation for Global/Local Models} First, the servers still securely compute the per-sample gradient of the global model $\ash{\boldsymbol{g}_{\boldsymbol{\theta}, j}}$ for $ j \in [1,B]$ via~\BWD~as in previous protocols. In addition, the servers securely compute the per-sample gradients for each client's local layer(s) $\ash{\boldsymbol{g}_{\boldsymbol{\phi}^{L}_{i}, j}}$ for $ j \in [1,B]$ using \PMult:~We have shown in Section~\ref{sec:mpc_dp_localfreez} that the servers and the clients can collaborate to securely compute the gradient w.r.t. the intermediate output of the clients. A similar procedure can be used to compute the gradient of loss w.r.t. model parameters in a local model. 
Note as the client has locally computed ${\partial \interout^{(i)}_{b, j}}/{{\partial \boldsymbol{\phi}}^{L}_i}$ and secret-shares the results to the servers during forward-passing, by chain rule, the per-sample gradient w.r.t. $\boldsymbol{\phi}^{L}_i$ can be securely computed through
    \begin{align}
    \ash{\boldsymbol{g}_{\boldsymbol{\phi}^{L}_{i}, j}} = \ash{\frac{\partial  \mathcal{L}_j}{\partial \boldsymbol{\phi}^{L}_i}} = \ash{ \frac{\partial\mathcal{L}_j}{\partial  \interout^{(i)}_{b, j}}} \cdot \ash{ \frac{\partial \interout^{(i)}_{b, j}}{{\partial \boldsymbol{\phi}}^{L}_i}}~~. \label{eq:multip}
    \end{align}

\paragraph{Noise Addition for Securely Computed Information} After gradient computation, our protocol adds privacy-preserving noise to the gradients before releasing them. For per-sample gradient clipping, the servers \emph{concatenate} all the gradients under the $j$-th sample~(after flattening), which results in:
\begin{align}
\ash{\boldsymbol{g}_{j}} = \ash{\{\boldsymbol{g}_{\boldsymbol{\theta}, j}, \boldsymbol{g}_{\boldsymbol{\phi}^{L}_{1}, j}, \dots, \boldsymbol{g}_{\boldsymbol{\phi}^{L}_{N}, j} \} }~~, \nonumber 
\end{align}
where $\boldsymbol{g}_{j} \in \mathbb{R}^{n_{\text{conca}}}$, with $n_{\text{con}} = (n_{\boldsymbol{\theta}} + \sum^N_{i=1}n^{L}_{i})$, where $n^{L}_{i}$ is the number of parameters in ${\boldsymbol{\phi}}^{L}_i$. Then, servers apply clipping on $\boldsymbol{g}_{j}$: 
\begin{align}
\ash{\overline{\boldsymbol{g}}_{j}} = \ash{{\boldsymbol{g}_{j}}/{\gamma_j}}~~,
\end{align}
where $\gamma_j = \max\left(1, \frac{\| \boldsymbol{g}_{j}\|_2}{\gamma}\right)$ for $j \in [1, B]$.
After clipping, the servers aggregate all clipped gradients to obtain $\ash{\overline{\boldsymbol{g}}} = \sum^B_{j = 1} \ash{\overline{\boldsymbol{g}}_{j}}$ and add the correlated noise under the BMF mechanism to the aggregated gradient by $\ash{\Tilde{\boldsymbol{g}}} = \frac{1}{B}(\ash{\overline{\boldsymbol{g}}} + \ash{\boldsymbol{\mathcal{N}}_{\Tilde{\boldsymbol{g}}} })$
where $\boldsymbol{\mathcal{N}}_{\Tilde{\boldsymbol{g}} }$ is obtained through $\ash{\boldsymbol{\mathcal{N}}_{\Tilde{\boldsymbol{g}}}} = \left[\boldsymbol{\Omega}^{-1}\ash{\boldsymbol{\mathcal{N}}_{\text{Tab}}}\right]_{[t,:]} $ and that $\ash{\boldsymbol{\mathcal{N}}_{\text{Tab}}}$ is computed by running \PGS~with the input of $(T_{\emph{num}}, n_{\text{conca}}, \sigma \cdot \gamma \cdot \text{sens}_{\kappa,\text{b}}(\boldsymbol{\Omega}))$. After obtaining $ \ash{\Tilde{\boldsymbol{g}}} $, servers use $\ash{\Tilde{\boldsymbol{g}}_{\boldsymbol{\theta}}}$, which is the gradient of global model part, to update the global model $\ash{\globalm}$. Meanwhile, the servers take the gradient of the chosen local layer(s) for each client  $\Tilde{\boldsymbol{g}}_{\boldsymbol{\boldsymbol{\phi}^{L}_{i}}}$ out from $\ash{\Tilde{\boldsymbol{g}}}$, and distribute the gradient back to $E_i$, for $i \in [1, N]$.

\paragraph{Local Per-sample Gradient Reconstruction and Model Updates} On the client's end, since $\Tilde{\boldsymbol{g}}_{\boldsymbol{\phi}^{L}_{i}}$ is the privatized gradient~(aggregated at batch-level) with noise added, it cannot be directly used to conduct backpropagation on {the} local model: 
Recall that in order to recover the back-propagation flow in the plain-text model in Section~\ref{sec:plian}, the client needs $\{\partial\mathcal{L}_j/{\partial  \interout^{(i)}_{b,j}}\}^{B}_{j=1}$, i.e., the per-sample gradient w.r.t. the intermediate output from the client. Although recovering the exact per-sample gradient is infeasible as the received $\Tilde{\boldsymbol{g}}_{\boldsymbol{\phi}^{L}_{i}}$ is privacy-preserved, we show that a client can estimate the clipped per-sample gradient w.r.t. its intermediate output. In particular, 
We can establish the following linear equation system, based on the chain-rule applied on  $\Tilde{\boldsymbol{g}}_{\boldsymbol{\phi}^{L}_{i}}$:  

\begin{align}
& \Tilde{\boldsymbol{g}}_{\boldsymbol{\phi}^{L}_{i}}=\left[ \begin{matrix}
\Tilde{\boldsymbol{g}}_{\boldsymbol{\phi}^{L}_{i}}[1] \\
                             \vdots \\
                              \Tilde{\boldsymbol{g}}_{\boldsymbol{\phi}^{L}_{i}}[n^{L}_{i}] 
\end{matrix}
  \right] =
\frac{1}{B}\left[ \begin{matrix}
\frac{\partial \interout^{(i)}_{b, 1}}{{\partial \boldsymbol{\phi}}^{L}_i[1]}  &
\frac{\partial \interout^{(i)}_{b, 2}}{{\partial \boldsymbol{\phi}}^{L}_i[1]}  & 
\dots & \frac{\partial \interout^{(i)}_{b, B}}{{\partial \boldsymbol{\phi}}^{L}_i[1]}  \\
    \vdots & \vdots & \vdots & \vdots \\
   \frac{\partial \interout^{(i)}_{b, 1}}{{\partial \boldsymbol{\phi}}^{L}_i[n^{L}_{i}]} &  
   \frac{\partial \interout^{(i)}_{b, 2}}{{\partial \boldsymbol{\phi}}^{L}_i[n^{L}_{i}]} & 
   \dots &   \frac{\partial \interout^{(i)}_{b, B}}{{\partial \boldsymbol{\phi}}^{L}_i[n^{L}_{i}]}
  \end{matrix}
  \right] \cdot \nonumber \\ 
  & \qquad \qquad \qquad  \qquad \ \ \ \   \overline{\boldsymbol{\Gamma}} \cdot \left[ \begin{matrix} 
  \frac{\partial\mathcal{L}_1}{\partial  \interout^{(i)}_{b, 1}} \\
  \vdots \\
  \frac{\partial\mathcal{L}_B}{\partial  \interout^{(i)}_{b, B}}
\end{matrix}
  \right]+ \frac{1}{B}\left[ \begin{matrix}
                            \mathcal{N}_i[1] \\
                             \vdots \\
                             \mathcal{N}_i[n^{L}_{i}] 
\end{matrix}
  \right] \nonumber \\
& \qquad \qquad \qquad  \quad \ \ = \frac{1}{B}({\interout}_{{\boldsymbol{\phi}}^{L}_i} \cdot \overline{\boldsymbol{\Gamma}} \cdot \boldsymbol{g}_{\textbf{H}^{(i)}_b} + \boldsymbol{\mathcal{N}}_i) \nonumber\\
& \qquad \qquad \qquad  \quad \ \ = \frac{1}{B}({\interout}_{{\boldsymbol{\phi}}^{L}_i} \cdot  \widehat{\boldsymbol{g}}_{\textbf{H}^{(i)}_b} + \boldsymbol{\mathcal{N}}_i)~~, \label{lineq}
\end{align}
where matrix ${\interout}_{{\boldsymbol{\phi}}^{L}_i} = \left\{{\partial \interout^{(i)}_{b, j}}/{{\partial \boldsymbol{\phi}}^{L}_i[l]}\right\}^B_{j=1} \in \mathbb{R}^{n^{L}_{i} \times (d_{\interout^{(i)}} B)}$ with $l \in [1, n^L_i]$, and vector $\boldsymbol{g}_{\textbf{H}^{(i)}_b} = \left\{{\partial\mathcal{L}_j}/{\partial  \interout^{(i)}_{b, j}}\right\}^B_{j=1} \in \mathbb{R}^{d_{\interout^{(i)}} B}$. More specifically,
\begin{align}
{\partial \interout^{(i)}_{b, j}}/{{\partial \boldsymbol{\phi}}^{L}_i[l]} = \left[{\partial \interout^{(i)}_{b, j}[1]}/{{\partial \boldsymbol{\phi}}^{L}_i[l]},\dots, {\partial \interout^{(i)}_{b, j}[d_{\interout^{(i)}}]}/{{\partial \boldsymbol{\phi}}^{L}_i[l]}\right]~~, \nonumber
\end{align}
and 
\begin{align}
{\partial\mathcal{L}_j}/{\partial  \interout^{(i)}_{b, j}} = \left[ {\partial\mathcal{L}_j}/{\partial  \interout^{(i)}_{b, j}[1]},\dots,{\partial\mathcal{L}_j}/{\partial  \interout^{(i)}_{b, j}[d_{\interout^{(i)}}]} \right]^{T}~~. \nonumber 
\end{align}
The term $\overline{\boldsymbol{\Gamma}} \in \mathbb{R}^{(d_{\interout^{(i)}}B) \times (d_{\interout^{(i)}} B)}$ is a diagonal matrix: $\overline{\boldsymbol{\Gamma}}=\text{diag}(\boldsymbol{\Gamma}_j )$
where each diagonal entry, ${\boldsymbol{\Gamma}}_j \in \mathbb{R}^{d_{\interout^{(i)}} \times d_{\interout^{(i)}}}$, is also a diagonal matrix whose diagonal entries are $\gamma^{-1}_j$, i.e., $\boldsymbol{\Gamma}_j = \text{diag}(\gamma^{-1}_j)$. The term $\boldsymbol{\mathcal{N}}_i \in \mathbb{R}^{n^{L}_{i} }$ is a sub-vector in $\boldsymbol{\mathcal{N}}_{\Tilde{\boldsymbol{g}} }$ which is the noise that was added to privatize $\Tilde{\boldsymbol{g}}_{\boldsymbol{\phi}^{L}_{i}}$. 

The key intuition is that the equation in Eq.~\eqref{lineq} is a linear system with noisy measurement:~We can estimate $\widehat{\boldsymbol{g}}_{\textbf{H}^{(i)}_b}$, which is
\begin{align}
\widehat{\boldsymbol{g}}_{\textbf{H}^{(i)}_b} \triangleq \overline{\boldsymbol{\Gamma}} \cdot \boldsymbol{g}_{\textbf{H}^{(i)}_b} = \{\gamma_j \cdot {\partial\mathcal{L}_j}/{\partial  \interout^{(i)}_{b,j}}\}^{B}_{j=1}~~,
\end{align}
from the locally computed ${\interout}_{{\boldsymbol{\phi}}^{L}_i}$ and the observed $\Tilde{\boldsymbol{g}}_{\boldsymbol{\phi}^{L}_{i}}$. Since the gradients are estimated from privacy-preserving outputs from the servers, the client can apply standard solvers such as LS~(least squares), MMSE~(minimum mean square error), RR (ridge regression) or any other linear estimation techniques locally to solve  Eq.~\eqref{lineq} under {the} \emph{plaintext model}. {Note our protocol is estimator-agnostic. We show how estimation techniques can extract per-sample gradients from the aggregated private gradients to enable backpropagation, without relying on any specific estimator. Exploring alternative estimators and their potential performance gains is left for future work.}

Once $\widehat{\boldsymbol{g}}_{\textbf{H}^{(i)}_b}$ is estimated, the $i$-th client can update its local layers by back-propagation. The complete protocol is presented in \figref{psecond}.

\subsubsection{Estimation Performance Analysis}~Typically, to obtain a stable and reliable estimate in a linear system with noise, it is desirable to have at least as many observations as unknowns. Under this condition, the system becomes either exactly determined or overdetermined. In our case, this \emph{translates} to $n^{L}_{i} \geq d_{\interout^{(i)}} \cdot B$, which ensures that the estimation problem remains well-posed and can be solved robustly. In general, aside from the ratio between $n^{L}_{i}$ and $d_{\interout^{(i)}} \cdot B$, the estimation error is dependent on the intrinsic properties of the matrix ${\interout}_{{\boldsymbol{\phi}}^{L}_i}$, the characteristics of the added noise, and the choice of estimator as well as any assumptions on $\widehat{\boldsymbol{g}}_{\textbf{H}^{(i)}_b}$ if needed. Although various estimators could be used in the linear system, we take RR estimator, which is defined by
\begin{align}
\widehat{\boldsymbol{g}}^{\text{ridge}}_{\textbf{H}^{(i)}_b} = ({\interout}^T_{{\boldsymbol{\phi}}^{L}_i}{\interout}_{{\boldsymbol{\phi}}^{L}_i} + \lambda I )^{-1}{\interout}^T_{{\boldsymbol{\phi}}^{L}_i}\Tilde{\boldsymbol{g}}_{\boldsymbol{\phi}^{L}_{i}} ~~, \label{ridge_estimation}
\end{align}
where $\lambda$ is regularization parameter, as an illustrative example, as RR is more stable than LS in ill-conditioned systems and also unlike MMSE, does not require a prior on $\widehat{\boldsymbol{g}}_{\textbf{H}^{(i)}_b}$'s distribution, while still allowing bounded-error analysis. As a common heuristic in RR, we set $\lambda$ {to} the noise {variance}. Moreover, as the $\ell_2$ norm of $\widehat{\boldsymbol{g}}_{\textbf{H}^{(i)}_b}$ is bounded by $\gamma$, the estimation error under RR estimator for Eq.~\eqref{lineq}, is bounded by
\begin{align}
\mathbb{E}_{\text{Ridge}}
\leq \frac{\sigma_t^4 \gamma^2 + \sigma_t^2 d_{\interout^{(i)}} B^3\mu_{\max}}{(B^2\mu_{\min} + \sigma_t^2)^2}
~~, \label{error}
\end{align}
where $\sigma_t = \sigma \gamma \text{sens}_{\kappa,\text{b}}(\boldsymbol{\Omega})\|\boldsymbol{\Omega}^{-1}[t,:]\|^2_{\ell_2}$, $\mu_{\min}$ and $\mu_{\max}$ are the smallest and largest eigenvalues of ${\interout}^T_{{\boldsymbol{\phi}}^{L}_i}{\interout}_{{\boldsymbol{\phi}}^{L}_i}$, respectively. 
The detailed derivation for Eq.~(\ref{error}) and further discussion are in \appref{error}.

\begin{protofig}{Protocol $\PSecond$}{Training a Global model and Local models under Banded Matrix Factorization Mechanism}{psecond}

\textbf{Parameters:} Clients $\pclients = E_1, \ldots, E_N$, aggregation servers $\pservers = \{P_k\}^K_{k = 1}$. \\
\textbf{Preprocessing:} The servers run $\PGS$ with input $(T_\emph{num}, n_{\text{con}}, \sigma \gamma \text{sens}_{\kappa,\text{b}}(\boldsymbol{\Omega}))$ to receive $\ash{\boldsymbol{\mathcal{N}}_{\text{Tab}} }$. The servers compute $\ash{\boldsymbol{\mathcal{N}}_{\text{Corr}}}=\boldsymbol{\Omega}^{-1}\ash{\boldsymbol{\mathcal{N}}_{\text{Tab}}}$. \\

\textbf{Client-Server Phase:}~In each epoch, for each step, each client $E_i$ does the following:
\begin{enumerate}
    \item Proceeds the same steps as described in \figref{pfirst1}, but only computes and sends the $b$-th batch's $\ash{\textbf{H}^{(i)}_b}$ and $\ash{\textbf{y}_b}$ to $\pservers$. 
    \item Locally computes $\{{\partial \interout^{(i)}_{b, j}}/{{\partial \boldsymbol{\phi}}^{L}_i}\}$, and secret-shares $\ash{{\partial \interout^{(i)}_{b, j}}/{{\partial \boldsymbol{\phi}}^{L}_i}}$ for $j \in [1, B]$, then sends the shares to $\pservers$. \\
\end{enumerate}

\textbf{Server MPC:}~For each step, $\pservers$ do the following:
\begin{enumerate}
    \item Vertically concatenate $\ash{\textbf{H}^{(i)}_b}$ received from $\pclients$ for $i \in [1, N]$, which results in $\ash{\textbf{H}_b}$.
    \item Given $\ash{{\textbf{H}}_b}$ and $\ash{\textbf{y}_b}$, follow the same steps from 1 to 2, as described in \figref{pfirst1}, and eventually obtain $\ash{\mathcal{L}_j}$ and $\ash{\boldsymbol{g}_{\boldsymbol{\theta}, j}}$ for $j \in [1, B]$.
    \item Compute $\ash{\boldsymbol{g}_{\boldsymbol{\phi}^{L}_{i}, j}} = \ash{\frac{\partial  \mathcal{L}_j}{\partial \boldsymbol{\phi}^{L}_i}} = \ash{ \frac{\partial\mathcal{L}_j}{\partial  \interout^{(i)}_{b, j}}} \cdot \ash{ \frac{\partial \interout^{(i)}_{b, j}}{\boldsymbol{\partial \phi}^{L}_i}} $ for $j \in [1, B]$, with $i \in [1, N]$, using $\PMult$.
    \item Concatenate all $j$-th sample's gradient together to obtain $\ash{\boldsymbol{g}_{j}}$, for $j = [1, B]$.
    \item Follow the same steps from 3 to 5, as described in \figref{pfirst1}, and apply them to $\ash{\boldsymbol{g}_{j}}$, for $j \in [1, B]$, and eventually obtain $\ash{\overline{\boldsymbol{g}}}$. 
    \item Add noise on the aggregated gradients to obtain $\ash{ \tilde{\boldsymbol{g}}} = \frac{1}{B}(\ash{\overline{\boldsymbol{g}}} + \ash{\boldsymbol{\mathcal{N}}_{\Tilde{\boldsymbol{g}}} })$, where $\ash{\boldsymbol{\mathcal{N}}_{\Tilde{\boldsymbol{g}}} } = \ash{\boldsymbol{\mathcal{N}}_{\text{Corr}}}_{[t,:]}$.
    \item Update $\ash{\globalm}$ via gradient descent based on $\ash{ \tilde{\bs{g}}_{\theta}}$. Reconstruct $\Tilde{\boldsymbol{g}}_{\boldsymbol{\boldsymbol{\phi}^{L}_{i}}}$ and send it back to $E_i$, for $i \in [1, N]$. \\
\end{enumerate}

\textbf{Server-Client Phase:}~For each step, each client $E_i$ does the following: 
\begin{enumerate}
    \item Solves Eq.~\eqref{lineq} to get the estimation of the clipped per-sample gradients:~$\widehat{\boldsymbol{g}}_{\textbf{H}^{(i)}_b}$, based on $\Tilde{\boldsymbol{g}}_{\boldsymbol{\boldsymbol{\phi}^{L}_{i}}}$.
    \item Update the desired local layers through gradient descent, by applying back-propagation on $\widehat{\boldsymbol{g}}_{\textbf{H}^{(i)}_b}$. 
\end{enumerate}

\end{protofig}

\section{Security and Privacy Analysis}

{\subsection{Input Privacy}

Our protocols leak no intermediate values and all computation on the servers is performed entirely under MPC on secret-shared data.
Input privacy therefore follows directly from the provable security of the employed MPC building blocks.
Concretely, every sub-protocol used in our constructions operates in the \emph{arithmetic black-box} ($\mathcal{F}_{ABB}$) hybrid model, whose operations of multiplication, comparison, bit decomposition, truncation, and domain conversion are each UC-secure.
We only ever make black-box use of these subprotocols implemented in MP-SPDZ~\cite{CCS:Keller20}, and by the UC composition theorem~\cite{canetti2001universally}, the composed protocol is itself secure. We provide the theorem statement below:

\begin{theorem}[Input privacy]\label{thm:input_privacy_main}
Protocols $\PFirstA$, $\PFirstB$, and $\PSecond$ are secure against a static, semi-honest adversary corrupting at most one of the three servers, in the $\mathcal{F}_{ABB}$-hybrid model.
That is, the corrupted server's view during protocol execution can be efficiently simulated given only its input shares and output shares.
\end{theorem}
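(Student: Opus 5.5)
We prove Theorem~\ref{thm:input_privacy_main} by the standard simulation paradigm in the $\mathcal{F}_{ABB}$-hybrid model, followed by an appeal to the UC composition theorem~\cite{canetti2001universally}. First we rewrite each of $\PFirstA$, $\PFirstB$ and $\PSecond$ as a sequence of calls to ideal functionalities. These are input sharing, linear operations on $\ash{\cdot}$, $\PMult$, $\PAct$ (inside $\FWD$/$\BWD$), comparison (for $\max$ in clipping), $\PNM$, $\PDiv$, $\SHU$, $\PGS$, and reconstruction. We then check that every server-side operation is either local (additions, multiplication by the public matrices $\boldsymbol{\Omega}^{-1}$, concatenation, the public scaling $1/B$) or one of these calls. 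The public quantities are the step index $t$, the batch schedule, $B$, $\gamma$, $\sigma$, $\boldsymbol{\Omega}$ and the architecture of $\globalm$; these determine the control flow, so the sequence of calls is data-independent.

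Next we build a simulator $\Sim$ for a single corrupted server $P_k$. The simulator proceeds step by step.
\begin{enumerate}
\item[(i)] \emph{Client inputs.} For every client share $\ash{\textbf{H}^{(i)}_b}$, $\ash{\textbf{y}_b}$ and, in $\PSecond$, $\ash{\partial \interout^{(i)}_{b,j}/\partial \boldsymbol{\phi}^L_i}$, $\Sim$ outputs uniformly random shares. For replicated 3-party sharing with $t=1$, a single server's shares are uniform and independent of the secret, so this distribution is identical to the real one.
\item[(ii)] \emph{Local operations.} These produce no messages, and $\Sim$ computes them on its simulated shares.
\item[(iii)] \emph{Functionality calls.} In the $\mathcal{F}_{ABB}$-hybrid world, each call returns to $P_k$ only fresh shares of the output, and $\Sim$ samples these uniformly. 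This covers $\PGS$ in preprocessing and the per-epoch $\SHU$; the shuffle permutation is never revealed, because $\SHU$'s ideal functionality outputs only re-randomized shares.
\item[(iv)] \emph{Openings.} The only reconstruction in $\PFirstA$ and $\PFirstB$ is the final release of the noisy model. In $\PSecond$ there is additionally the per-step opening of $\Tilde{\boldsymbol{g}}_{\boldsymbol{\phi}^L_i}$ to client $E_i$. These opened values are exactly the designated protocol outputs. $\Sim$ receives them as output and programs the other parties' shares so that they reconstruct to them, which is possible because $P_k$'s share is already fixed and the remaining shares are determined.
\end{enumerate}

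A hybrid argument then replaces real sub-protocol executions by their ideal functionalities one at a time. Each replacement is indistinguishable by the UC security of that sub-protocol in MP-SPDZ~\cite{CCS:Keller20}. After all replacements, the resulting view is identically distributed to $\Sim$'s output.

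The main obstacle is not any single sub-protocol but ensuring the theorem's phrase ``given only its input shares and output shares'' is accurate. Two points need checking. The first is that the fixed-point truncation and approximate square root, exponentiation and division of~\cite{ACNS:AlySma19} leak only statistically negligible information; their UC security is stated with statistical rather than perfect simulation, so the final indistinguishability is statistical. The second is that the data-dependent clipping factor $\gamma_j$ is never opened, since it is computed via secure comparison and $\PDiv$ and stays shared. We would handle both by listing, for each step of \figref{pfirst1}, \figref{pfirst2} and \figref{psecond}, the functionality invoked and confirming that it outputs only shares. Composition then gives the claim for all three protocols at once.
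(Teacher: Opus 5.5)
Your proposal is correct and follows essentially the paper's route: write each protocol as a sequence of $\mathcal{F}_{ABB}$ calls plus deterministic local operations, simulate the corrupted server via each building block's simulator, conclude by UC composition, and treat the only openings (final model, and the $\PSecond$ release of $\Tilde{\boldsymbol{g}}_{\boldsymbol{\phi}^{L}_{i}}$) as DP-protected outputs. You are somewhat more explicit than the paper in simulating the client-input shares and programming the opened shares; one small correction is that the guarantee is computational, not statistical, once the PRF-based correlated randomness of Araki et al.\ is instantiated, which is what the paper states.
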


\begin{proof}
The details of the proof are given in~\ref{sec:input_privacy_formal} in \appref{mpc}.
\end{proof}


\subsection{Output Privacy}
\label{subsec:outputprivacy}

To analyze the privacy guarantee of the proposed protocols, we present the following theorems:

 \begin{theorem}\label{them1}
The protocol in~\figref{pfirst1} is $(\epsilon, \delta)$ differentially private~(\ref{def:dp}) with respect to each client $E_i$'s dataset $\mathcal{D}_i$, with $i \in [1, N]$, under Gaussian Mechanism~(\ref{def:gaussian}).
\end{theorem}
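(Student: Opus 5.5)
Our plan is to reduce the claim to the standard DP-SGD analysis of \cite{abadi2016privacy}, combined with subsampling without replacement \cite{wang2019subsampled}. We first fix what is \emph{released} by the servers in $\PFirstA$. Every intermediate value stays secret-shared; by Theorem~\ref{thm:input_privacy_main} the view of any corrupted server can be simulated from its shares. Hence the only information leaving the MPC is the sequence of privatized gradients $\tilde{\boldsymbol{g}}^1_{\boldsymbol{\theta}},\dots,\tilde{\boldsymbol{g}}^{T_\emph{num}}_{\boldsymbol{\theta}}$, or equivalently the model iterates and the final $\globalm$, which are a deterministic function of them. By post-processing, it suffices to show that the map from a dataset to this sequence is $(\epsilon,\delta)$-DP.

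Next we analyze a single step $t$. Neighbouring datasets for client $E_i$ differ in one record of $\mathcal{D}_i$. Since the local models are frozen and deterministic, this changes only one row of $\overline{\textbf{H}}$ (and possibly of $\overline{\textbf{y}}$ if $i=N$), and hence only one per-sample gradient. After clipping as in Eq.~\eqref{clipping}, each per-sample gradient satisfies $\|\overline{\boldsymbol{g}}_{\boldsymbol{\theta},j}\|_2\le\gamma$. The sum $\overline{\boldsymbol{g}}_{\boldsymbol{\theta}}$ therefore has $\ell_2$-sensitivity $\gamma$ under replace-one adjacency ($2\gamma$ under the loose bound, which is absorbed into $\sigma$). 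Adding $\mathcal{N}(0,\sigma^2\gamma^2\boldsymbol{I})$, one fresh row of $\boldsymbol{\mathcal{N}}_{\text{Tab}}$ per step, is then the Gaussian mechanism (\ref{def:gaussian}). The rescaling by $1/B$ is post-processing. Note that the noise rows are independent across steps, because $\PGS$ samples them independently during preprocessing.

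Then we handle subsampling and composition. The secure shuffle $\SHU$ applies, each epoch, a uniformly random permutation unknown to every party, including clients and any single server. Consequently each batch is, marginally, a uniform size-$B$ subset drawn without replacement with rate $q=B/M$. We invoke the amplification bound for the subsampled Gaussian mechanism without replacement \cite{wang2019subsampled}, which gives a per-step R\'enyi-DP or moments-accountant bound. Composing adaptively over $T_\emph{num}$ steps, where step $t$ depends on $\boldsymbol{\theta}^t$ (a function of earlier releases), uses the moments accountant of \cite{abadi2016privacy}. Converting the total R\'enyi bound to $(\epsilon,\delta)$ then yields the claim for a suitable $\sigma$.

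The main obstacle is justifying the subsampling step. Within an epoch, the batches come from a single shared permutation, so they are not independent Poisson or independently drawn subsamples, and the textbook accountant assumes fresh independent sampling at each step. Our first approach is to treat each step as a subsampled mechanism in the sense of \cite{wang2019subsampled} conditioned on the history, and to argue that the adversary's view does not reveal the permutation; this is where the hiding property of $\SHU$ and Theorem~\ref{thm:input_privacy_main} are used. If that conditioning argument fails, the fallback is to bound each epoch as a single shuffled pass or to cite a shuffle-based accountant, at the cost of a looser $\epsilon$.
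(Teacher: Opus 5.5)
Your main line is the paper's proof. Clipping bounds the sensitivity of $\overline{\boldsymbol{g}}_{\boldsymbol{\theta}}$, and each step is a Gaussian mechanism. The shuffle is taken to give subsampling at rate $B/M$, the per-step subsampled-Gaussian R\'enyi bound of \cite{wang2019subsampled} is composed linearly over $T_{\mathrm{num}}$ steps, and the total is converted to $(\epsilon,\delta)$. The paper never addresses the obstacle you flag. It composes the per-step amplified bound as though every batch were a fresh, independent subsample.

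The obstacle is genuine, and your first remedy does not remove it, for two reasons.
\begin{itemize}
\item Conditioned on the earlier batches of the epoch, the differing row is included at the $k$-th step with probability $0$ or $B/(M-(k-1)B)$, not $B/M$. The second value equals $1$ at the last step if the row has not yet appeared.
\item More fundamentally, the privacy adversary conditions on released outputs, not on the permutation. Its posterior over the permutation given earlier outputs differs under $\mathcal{D}$ and $\mathcal{D}'$. So the conditional law of step $t$ is a mixture whose weights depend on the neighbour, which is exactly what the subsampling lemma excludes.
\end{itemize}
The hiding property of $\SHU$ and Theorem~\ref{thm:input_privacy_main} concern the servers' view of shares, so they cannot help here. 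This is not a mere technicality: as far as I recall from recent DP-SGD analyses, shuffled batching can in some regimes have strictly larger privacy loss than independent-subsampling accountants, such as the Opacus one the paper uses, report.

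Your fallback, by contrast, is sound and suffices for the statement as written. Here is how to carry it out.
\begin{itemize}
\item Use replace-one adjacency. It is the natural notion here because alignment fixes the row set and $M$.
\item Fix the permutations of all epochs. The differing row then lies in exactly one batch per epoch. At every other step, the conditional output law given the past is identical under both neighbours.
\item Adaptive R\'enyi composition therefore charges only $E_{\mathrm{num}}$ Gaussian steps, each with sensitivity $2\gamma$ and noise scale $\sigma\gamma$. This gives order-$\alpha$ RDP of $2E_{\mathrm{num}}\alpha/\sigma^2$.
\item The permutations are drawn independently of the data. So joint convexity of $\exp((\alpha-1)D_\alpha)$ lets you average over them without increasing the bound. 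Converting to $(\epsilon,\delta)$ finishes the proof for suitably large $\sigma$.
\end{itemize}
The price is that all amplification is lost, roughly a factor $B_{\mathrm{num}}$ in the RDP. So this route does not justify the $\epsilon$ values the paper obtains from its subsampled accountant. Recovering amplification rigorously needs a per-epoch analysis of the shuffle: one row placed in a uniformly random step out of $B_{\mathrm{num}}$.

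Finally, your sensitivity remark is inverted. Under replace-one adjacency one clipped summand changes, so the sensitivity of $\overline{\boldsymbol{g}}_{\boldsymbol{\theta}}$ is $2\gamma$. This is attained when the new clipped gradient is the negative of the old one. The value $\gamma$ is the add/remove sensitivity. The paper also writes $\gamma$; the fix is simply to calibrate the noise to $2\gamma$.
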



\begin{proof}
The details of the proof are given in~\ref{sec:proof} in \appref{dp}.
\end{proof}

\begin{theorem}\label{them2}
The protocol in~\figref{pfirst2} is $(\epsilon, \delta)$ differentially private~(\ref{def:dp}) with respect to each client $E_i$'s dataset $\mathcal{D}_i$, with $i \in [1, N]$, under Matrix Factorization Mechanism~(\ref{def:maxtri_factorization}).
\end{theorem}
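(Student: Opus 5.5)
The plan is to reduce the protocol in~\figref{pfirst2} to the standard BandMF mechanism of~\cite{Kalinin2024banded} and then use post-processing. First, I will identify what the servers release. By Theorem~\ref{thm:input_privacy_main}, every value held in secret-shared form is simulatable from input and output shares alone. The only information that leaves the MPC is therefore the sequence of global-model iterates $\globalm^{1},\dots,\globalm^{T_\emph{num}}$, or equivalently the stream of privatized gradients $\tilde{\boldsymbol{g}}^t_{\boldsymbol{\theta}}$. No downstream messages go to the clients in this protocol, and the upstream shares $\ash{\textbf{H}^{(i)}_b}$ are sent once and reveal nothing. It thus suffices to show that the map $\mathcal{D}_i \mapsto (\tilde{\boldsymbol{g}}^t_{\boldsymbol{\theta}})_{t=1}^{T_\emph{num}}$ is $(\epsilon,\delta)$-DP for each $i$. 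The model iterates are obtained from this stream by deterministic gradient descent, which is a post-processing.

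Next, I will write the released stream in matrix form. Let $\overline{\boldsymbol{G}}\in\mathbb{R}^{T_\emph{num}\times n_{\boldsymbol{\theta}}}$ stack the clipped aggregated gradients $\overline{\boldsymbol{g}}^t_{\boldsymbol{\theta}}$. Then
\[
B\tilde{\boldsymbol{G}} = \overline{\boldsymbol{G}} + \boldsymbol{\Omega}^{-1}\boldsymbol{\mathcal{N}}_{\text{Tab}} = \boldsymbol{\Omega}^{-1}\left(\boldsymbol{\Omega}\overline{\boldsymbol{G}} + \boldsymbol{\mathcal{N}}_{\text{Tab}}\right),
\]
where $\boldsymbol{\mathcal{N}}_{\text{Tab}}$ has i.i.d.\ rows $\mathcal{N}(0,\sigma^2\gamma^2\text{sens}^2_{\kappa,\text{b}}(\boldsymbol{\Omega})\boldsymbol{I})$. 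Since $\boldsymbol{\Omega}$ is public and invertible (lower-triangular with $c_0=1$), releasing $\tilde{\boldsymbol{G}}$ is a post-processing of $\boldsymbol{\Omega}\overline{\boldsymbol{G}}+\boldsymbol{\mathcal{N}}_{\text{Tab}}$. This is exactly the encoder output of the MF mechanism in Eq.~\eqref{eq:mf_eq} with $\boldsymbol{C}_{\text{MF}}=\boldsymbol{\Omega}$.

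The key step, and the main obstacle, is bounding the $\ell_2$ sensitivity of $\boldsymbol{\Omega}\overline{\boldsymbol{G}}$ with respect to a change of one record of $\mathcal{D}_i$. I would define neighbouring datasets as replacing a single sample $\boldsymbol{x}^{(i)}_m$ (or its label, for $i=N$). Because the local models are frozen and deterministic, such a change affects only the single row $\interout^{(i)}_m$ of $\overline{\textbf{H}}$. In turn, it affects only the clipped per-sample gradient of sample $m$, at exactly those steps where $m$'s batch is processed. Because no shuffling is performed, these steps are $t_0, t_0+\text{b},\dots,t_0+(\kappa-1)\text{b}$ with $\kappa=E_\emph{num}$ and $\text{b}=B_\emph{num}$, which is the fixed-epoch-order $(\kappa,\text{b})$-participation. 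At each such step the difference in $\overline{\boldsymbol{g}}^t_{\boldsymbol{\theta}}$ has norm at most $2\gamma$ under replace-one, or $\gamma$ under add/remove, by the clipping in Eq.~\eqref{clipping}.

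One subtlety needs care. The gradients at later steps depend on earlier noisy iterates, so the difference is not a fixed matrix. I would handle this adaptivity by invoking the adaptive-stream guarantee for MF mechanisms, which states that the Gaussian mechanism on $\boldsymbol{C}_{\text{MF}}\boldsymbol{G}$ remains valid for adaptively chosen gradients~\cite{denisov2022improved,ChoquetteChoo2023multi}. Combined with Theorem~2 of~\cite{Kalinin2024banded}, which holds for $p=\text{b}$ under $\text{b}$-min-separated participation, this gives $\sup\|\boldsymbol{\Omega}(\overline{\boldsymbol{G}}-\overline{\boldsymbol{G}}')\|_F\le \gamma\,\text{sens}_{\kappa,\text{b}}(\boldsymbol{\Omega})$, adjusting for the factor $2$ according to the neighbouring notion of~\ref{def:sens}.

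Finally, the noise scale $\sigma\gamma\,\text{sens}_{\kappa,\text{b}}(\boldsymbol{\Omega})$ makes this a Gaussian mechanism with noise multiplier $\sigma$ relative to the sensitivity. By~\ref{def:gaussian} and~\ref{def:maxtri_factorization}, this yields $(\epsilon,\delta)$-DP, with $\sigma$ calibrated from $(\epsilon,\delta)$. The argument is identical for every $i\in[1,N]$, since each client's record enters only through its own row of $\overline{\textbf{H}}$ (and $\overline{\textbf{y}}$ for $E_N$). The post-processing property then carries the guarantee to the released model and everything derived from it.
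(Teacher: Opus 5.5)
Your proposal is correct and follows essentially the paper's route: clipping bounds each step's contribution by $\gamma$, and the unshuffled batch order yields $(\kappa,\text{b})$-participation. The released gradient stream is therefore the MF mechanism with encoder $\boldsymbol{\Omega}$ and noise scaled by $\gamma\,\text{sens}_{\kappa,\text{b}}(\boldsymbol{\Omega})$, and the model follows by post-processing. You are more explicit than the paper (which just asserts sensitivity $\gamma$ and calibrates $\sigma$ by FFT-based accounting) about the identity $B\tilde{\boldsymbol{G}}=\boldsymbol{\Omega}^{-1}(\boldsymbol{\Omega}\overline{\boldsymbol{G}}+\boldsymbol{\mathcal{N}}_{\text{Tab}})$, about adaptivity, and about the neighbouring notion; on the last point, the factor $\gamma$ rather than $2\gamma$ corresponds to the zero-out relation of Definition~\ref{def:sens_MF}, in which the example keeps its batch slot but contributes nothing.
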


\begin{proof}
The details of the proof are given in~\ref{sec:proof2} in \appref{dp}.
\end{proof}

\begin{theorem}\label{them3}
The protocol in~\figref{psecond} is $(\epsilon, \delta)$ differentially private~(\ref{def:dp}) with respect to each client $E_i$'s dataset $\mathcal{D}_i$, with $i \in [1, N]$, under Matrix Factorization mechanism~(\ref{def:maxtri_factorization}).
\end{theorem}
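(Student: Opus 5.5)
The argument reduces Theorem~\ref{them3} to the guarantee of the Matrix Factorization mechanism (\ref{def:maxtri_factorization}) under $(\kappa,\text{b})$-participation, and then uses post-processing. The first step is to fix exactly what is released in $\PSecond$. By Theorem~\ref{thm:input_privacy_main}, the servers' views reveal nothing beyond their outputs, so the only data-dependent quantities leaving MPC are the privatized vectors $\Tilde{\boldsymbol{g}}^t$ for $t\in[1,T_\emph{num}]$. The global part $\Tilde{\boldsymbol{g}}^t_{\boldsymbol{\theta}}$ updates $\ash{\globalm}$, and each block $\Tilde{\boldsymbol{g}}^t_{\boldsymbol{\phi}^L_i}$ is sent to $E_i$. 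The estimate $\widehat{\boldsymbol{g}}_{\textbf{H}^{(i)}_b}$, the local updates of $\boldsymbol{\phi}_i$, and therefore all later $\textbf{H}^{(i)}_b$ and auxiliary Jacobians are functions of released values and of the client's own data. The threat model treats every client as a potential observer, so I will conservatively take the entire concatenated stream $\{\Tilde{\boldsymbol{g}}^t\}_t$ as public and prove it is $(\epsilon,\delta)$-DP. Privacy of the final model and of all client-side computation then follows from the post-processing property.

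The second step writes the released stream in matrix-factorization form. Stacking the rows gives
\[
B\Tilde{\boldsymbol{G}}=\overline{\boldsymbol{G}}+\boldsymbol{\Omega}^{-1}\boldsymbol{\mathcal{N}}_{\text{Tab}}=\boldsymbol{\Omega}^{-1}\left(\boldsymbol{\Omega}\overline{\boldsymbol{G}}+\boldsymbol{\mathcal{N}}_{\text{Tab}}\right),
\]
where $\overline{\boldsymbol{G}}$ has as its $t$-th row the aggregated clipped gradient $\overline{\boldsymbol{g}}^t$. So the output is a fixed public post-processing by $\boldsymbol{\Omega}^{-1}/B$ of $\boldsymbol{\Omega}\overline{\boldsymbol{G}}+\boldsymbol{\mathcal{N}}_{\text{Tab}}$, which is exactly the encoder release $\boldsymbol{C}_{\text{MF}}\boldsymbol{G}+\boldsymbol{Z}$ in Eq.~\eqref{eq:mf_eq} with $\boldsymbol{C}_{\text{MF}}=\boldsymbol{\Omega}$. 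Here $\boldsymbol{\mathcal{N}}_{\text{Tab}}$ has i.i.d.\ $\mathcal{N}(0,(\sigma\gamma\,\text{sens}_{\kappa,\text{b}}(\boldsymbol{\Omega}))^2)$ entries. The release is adaptive, since $\overline{\boldsymbol{g}}^t$ depends on earlier outputs through both the global and the local models. For this I will invoke the adaptive-streaming version of the MF guarantee from \cite{denisov2022improved,ChoquetteChoo2023multi}, which covers lower-triangular $\boldsymbol{\Omega}$, as is the case for the banded Toeplitz $p$-BSR matrix.

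The third step, which is the main obstacle, is the sensitivity argument. Fix a client $E_i$ and neighboring datasets $\mathcal{D}_i,\mathcal{D}_i'$ differing in one record. Because of the one-to-one alignment and the fixed batch order, this record belongs to exactly one sample index $m$, and that sample participates at steps $t_0,t_0+\text{b},\dots$, at most $\kappa$ times. I will show that at each participating step only the $m$-th summand of $\overline{\boldsymbol{g}}^t$ changes, and that its change has norm at most $2\gamma$, or $\gamma$ under add/remove adjacency. This needs care for three reasons. First, the changed record alters $\textbf{H}^{(i)}_{b,m}$ and the Jacobian $\partial\interout^{(i)}_{b,m}/\partial\boldsymbol{\phi}^L_i$, so every coordinate block of $\boldsymbol{g}_m$ (global and all local layers) may change. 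Second, the concatenated vector $\boldsymbol{g}_m$ is clipped jointly to norm $\gamma$, which bounds that change. Third, other samples' gradients in the same or later batches are affected only through the model parameters, which are functions of previous outputs; the adaptive analysis conditions on those outputs, so this effect is absorbed. Non-participating steps contribute no difference. The difference matrix $\Delta\overline{\boldsymbol{G}}$ therefore has row-norm at most $\gamma$ on a $(\kappa,\text{b})$-participation pattern. By the definition of $\text{sens}_{\kappa,\text{b}}(\boldsymbol{\Omega})$ (\ref{def:sens}, with $p=\text{b}$), the Frobenius norm satisfies $\|\boldsymbol{\Omega}\Delta\overline{\boldsymbol{G}}\|_F\le\gamma\,\text{sens}_{\kappa,\text{b}}(\boldsymbol{\Omega})$.

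The last step concludes. With the noise standard deviation $\sigma\gamma\,\text{sens}_{\kappa,\text{b}}(\boldsymbol{\Omega})$, the release is a Gaussian mechanism with noise multiplier $\sigma$, hence $(\epsilon,\delta)$-DP for the $(\epsilon,\delta)$ that the Gaussian/MF accounting (\ref{def:gaussian}, \ref{def:maxtri_factorization}) assigns to $\sigma$. Post-processing then extends the guarantee to $\globalm$, to all $\Tilde{\boldsymbol{g}}_{\boldsymbol{\phi}^L_i}$, to the ridge estimates in Eq.~\eqref{ridge_estimation}, and to the updated local models. Since $i$ was arbitrary, the bound holds for every $\mathcal{D}_i$.
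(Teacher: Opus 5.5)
Your route is the same as the paper's: treat the whole concatenated stream $\{\Tilde{\boldsymbol{g}}^t\}_t$ as the BandMF release $\boldsymbol{\Omega}\overline{\boldsymbol{G}}+\boldsymbol{\mathcal{N}}_{\text{Tab}}$ with sensitivity $\gamma\,\text{sens}_{\kappa,\text{b}}(\boldsymbol{\Omega})$ from joint per-sample clipping, then obtain $\Tilde{\boldsymbol{g}}_{\boldsymbol{\theta}}$, $\Tilde{\boldsymbol{g}}_{\boldsymbol{\phi}^L_i}$ and the client-side estimates as sub-vectors and post-processing. The gap is in the third point of your sensitivity step. You assert that other samples' contributions change only through model parameters that are functions of previous outputs. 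In $\PSecond$ this is false for the local model of the client $E_i$ that holds the differing record. Your own first step already concedes this: later $\textbf{H}^{(i)}_b$ depend on released values \emph{and on the client's own data}.

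At the first step $t_0$ where sample $m$ participates, $E_i$ solves the ridge system with ${\interout}_{{\boldsymbol{\phi}}^{L}_i}$, which contains $\partial\interout^{(i)}_{b,m}/\partial\boldsymbol{\phi}^L_i$ evaluated on the raw record. It then backpropagates through $\boldsymbol{\mathcal{F}}_i$ on that raw record, and neither operation is clipped or noised. So even conditioned on identical transcripts, $\boldsymbol{\phi}_i$ differs between $\mathcal{D}_i$ and $\mathcal{D}_i'$ for all $t>t_0$. At every later step, including steps where $m$ does not participate, all $B$ intermediate outputs and auxiliary Jacobians sent by $E_i$ may change, and hence so may all $B$ clipped per-sample gradients. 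Then $\Delta\overline{\boldsymbol{G}}$ is supported on every row after $t_0$, with row norms bounded only by $2B\gamma$. This violates the neighboring-stream hypothesis of the adaptive MF theorem, so $\|\boldsymbol{\Omega}\Delta\overline{\boldsymbol{G}}\|_F\le\gamma\,\text{sens}_{\kappa,\text{b}}(\boldsymbol{\Omega})$ does not follow. Conditioning on past outputs only absorbs dependence that flows through released values. That is why the argument works for $\PFirstB$, where local models are frozen. It would also go through for neighbors differing only in a label, since $E_N$'s local update uses its features and released values but never the labels. It does not work for feature records of a client whose local layers are trained.

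For the same reason, your closing post-processing claim is too broad. $E_i$'s updated local model is post-processing of the release only with respect to $\mathcal{D}_j$, $j\neq i$ (which is how the paper words it), not with respect to $\mathcal{D}_i$. The paper's proof takes the same route and simply carries over the per-step bound $\gamma$ from Theorem~\ref{them2}, so it does not resolve this point either. To actually establish the stated guarantee, you need an additional argument, or a modified protocol, that controls how the clients' non-private local updates feed back into later rows of $\overline{\boldsymbol{G}}$.

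A minor point: commit to zero-out adjacency. Under replace-one adjacency your own bound is $2\gamma$ per participation, and noise calibrated to $\sigma\gamma\,\text{sens}_{\kappa,\text{b}}(\boldsymbol{\Omega})$ would then give only an effective multiplier of $\sigma/2$.
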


\begin{proof}
The details of the proof are given in~\ref{sec:proof3} in \appref{dp}.
\end{proof}

\subsection{Additional Discussion}
\paragraph{Label Privacy.} 
In our setup specified in Section~\ref{sec:setup}, the client storing the label information is the $N$-th client $E_N$. This label client is treated identically to the other clients in our protocols. Labels remain protected because they are never revealed to any server in plaintext, and only operated upon through secret shares, under the semi-honest, honest majority assumption. The DP guarantees in Section~\ref{subsec:outputprivacy} apply to the label party.
DP’s post‑processing property prevents label recovery from noisy aggregated gradients.
    
\paragraph{Sources of Privacy.} When only the global model is trained, no messages are passed to the clients, and thus no information is leaked to the clients. The training of the global model, including gradient computation, DP-noise addition, and model update, are all computed via MPC over the secret-shared data. Only the final model will be reconstructed and published. The servers, which are honest-but-curious, will have the same view as a user of the published model, and thus can only infer as much information as the same DP-guarantee allows.


When both the global and local models are updated, each client will additionally see a privatized gradient with respect to its own batch data reconstructed from the servers at every training step. The privacy guarantee of the underlying banded matrix factorization mechanism applies to the full sequence of privatized
gradients, and hence all intermediate models are DP-protected~\cite{Choquette-Choo2023amplified}.
}

\section{Experimental Evaluation}
\label{sec:experiments}

We empirically evaluate the performance of our protocols. Specifically, we are interested in:
\begin{enumerate}
    \item
    the model utility of our protocols in various settings,
    \item 
    {guidelines for choosing between} the protocols based on their strengths and limitations, and 
    \item
    the practicality of our protocols in implementation.
\end{enumerate}
The empirical results show that our protocols achieve competitive utility while preserving privacy; {the} test accuracies significantly {surpass those of} local-DP methods and are competitive with other baselines across various privacy budgets $\varepsilon$. Both Gaussian and BMF mechanisms can achieve good performance; each has its own advantage depending on the similarity between the public dataset for pre-training and the private dataset for the actual task. Local model finetuning offers consistent performance improvement when the learner can afford more resources for computation and communication. Our benchmarks for MPC running time and communication costs show that our protocols are practically efficient.



\subsection{Experimental Setup}

\paragraph{Plaintext Setup} The plaintext experiments were conducted on a Dell PowerEdge XE8545 rack server with 256 CPU cores and 1 TB RAM, with an Nvidia A100 GPU.
 
\paragraph{MPC Setup} MPC protocols were run on a MacBook Pro with an Apple Silicon M4 Pro processor, and 48 GB RAM on a single thread. The network connections were simulated using packet filtering (pfctl and dnctl). More specifically, we benchmark the online phases of our protocols in two setups, LAN, and WAN. On LAN, we simulate a bandwidth of 1 Gbps and a delay of 1ms, whereas on WAN we use a bandwidth of 200 Mbps and a delay of 20ms.

We implement our protocols in the MP-SPDZ~\cite{CCS:Keller20} framework. Specifically, we use the replicated-ring protocol with three parties and passive security. 
The required correlated randomness, such as edaBits, was generated in advance.






\paragraph{Framework} We consider the FL setting where the dataset is vertically split between 2 clients~($N$ = 2). 
Each client has a local ResNet-18~\cite{he2016residual} model. 
We consider \emph{two} scenarios, \textbf{Pre-ImageNet} and \textbf{Pre-Cifar100}, based on how the local models are pretrained:

In \textbf{Pre-ImageNet}, each client loads the ResNet-18 that was pre-trained on the ImageNet dataset~\cite{jia2009imagnet}. This model is publicly available in~\cite{resnet18}. The output size of the pre-trained model is 1000, i.e., $d_{\interout^{(i)}} = 1000$ for $i \in [1, N]$. 

In \textbf{Pre-Cifar100}, the local model is pre-trained on the CIFAR-100 dataset~\cite{krizhevsky2009learning} which is treated as the public dataset in our case, under the \textbf{plaintext model}. We set the output size of the pre-trained model to 512, i.e., $d_{\interout^{(i)}} = 512$ for $i \in [1, N]$.
Enhanced by data augmentation methods including data normalization and {random} flipping {during} training, the pre-trained model achieves an average testing accuracy of $77.19\%$.
On the servers' end, we deploy one fully-connected layer as the global model.

We also note that using deeper networks or advanced techniques, such as Sharpness-Aware Minimization (SAM), can lead to significant improvements in accuracy, as shown in~\cite{foret2021sharpness}. However, our goal is not to employ state-of-the-art methods to maximize performance. Instead, we adopt a standard training approach that yields moderate yet reasonable results. This simplicity makes our setting a practical choice for real-world deployment scenarios.


\paragraph{Datasets} 
{We evaluate our framework on CIFAR-10~\cite{krizhevsky2009learning} and EMNIST~\cite{cohen2017emnist}, which are treated as private datasets in our experiments. As mentioned above, CIFAR-100 dataset is the public dataset used for pre-training. Each data sample in all datasets is vertically partitioned based on the number of participating clients, and the corresponding feature subset is subsequently forwarded to the respective client’s local model during forward-passing. 

For EMNIST, we split the dataset into 26 classes by letters. We assume that the data distribution across all clients is independently and identically distributed~(IID). As DP mechanisms provide distribution-independent theoretical privacy guarantees, the IID or Non-IID setting under our framework \emph{does not} affect the formal DP protection. We acknowledge that heterogeneous data distributions may influence model utility or empirical attack performance~\cite{khan2025review,madabushi2025opus}. However, a detailed study of their impact on practical privacy leakage is beyond the scope of this work.  } 

\paragraph{Protocol Setups} 
We use $\textbf{Plain}$ to denote the plaintext protocol in \secref{plian}, $\textbf{G-Shuff}$ to represent the protocol described in~\secref{mpc_dp_localfreez_shuffle}, $\textbf{G-BMF}$ to denote the protocol in~\secref{mpc_dp_localfreez_bmp}, and $\textbf{GL-BMF}$ to denote the protocol introduced in \secref{mpc_dp_gl_update}.



In \textbf{GL-BMF}, each client solves the linear equations~Eq.~\eqref{lineq} through {the} RR estimator with~$\lambda = (\sigma \gamma \text{sens}_{\kappa,\text{b}}(\boldsymbol{\Omega})\|\boldsymbol{\Omega}^{-1}[t,:]\|^2_{\ell_2}/B)^2$, at each step.
Besides, we apply LoRA finetuning~\cite{hu2022lora} to clients' local models. Instead of modifying all model parameters of the local model, we insert LoRA layers into the local model architecture as the \emph{adapter} to the new task. More specifically, we insert one LoRA layer to each block of the local ResNet model and only the LoRA layers are updated during training.
We also organize the LoRA layers in \emph{parallel}: the output of each LoRA layer is directly added to the final output of the local model instead of the next ResNet block. This choice is inspired by the finding in~\cite{lin2017fpn,rogers2020primer}:~Intermediate features can capture rich hierarchical information, hence incorporating LoRA layers in a parallel structure enables efficient adaptation across multiple feature extraction layers. 
Moreover, since the estimated per-sample gradients $\widehat{\boldsymbol{g}}_{\textbf{H}^{(i)}_b}$ contain noise, our parallel structure, which allows each layer to be updated independently, may mitigate the accumulation of {noise} as opposed to the conventional \emph{sequential} LoRA structure where {noise propagates} across layers and {is} particularly amplified on the bottom layers during back-propagation.  

In addition, under our parallel LoRA structure, we have 
\begin{align}
\interout^{(i)}_{b, j} = \interout^{(i),o}_{b, j} + \interout^{(i),L_1}_{b, j} +\dots +  \interout^{(i),L_M}_{b, j}~~, \nonumber
\end{align}
where $\interout^{(i),o}_{b, j} $ is the output of the original model and $\interout^{(i),L_l}_{b, j} $ is the output of the $l$-th LoRA layer, {where} $M$ is the total number of LoRA layers. Based on the principles of differentiation, we have
\[
\frac{\partial \mathcal{L}_j}{\partial \interout^{(i)}_{b, j} } = \frac{\partial \mathcal{L}_j}{\partial \interout^{(i), L_1}_{b, j} }=\dots =\frac{\partial \mathcal{L}_j}{\partial \interout^{(i), L_M}_{b, j} }~~.
\]
Thus, instead of uploading all $M$ layers' ${\partial \interout^{(i), L_l}_{b, j}}/{{\partial \boldsymbol{\phi}}^{L_l}_i}$ with $j \in [1, B]$ where $\boldsymbol{\phi}^{L}_i = \{\boldsymbol{\phi}^{L_l}_i\}^{M}_{l=1}$ to servers for computing $\{\Tilde{\boldsymbol{g}}_{\boldsymbol{\phi}^{L_l}_{i}}\}^{M}_{l=1}$, each client could just upload any one layer's ${\partial \interout^{(i), L_l}_{b, j}}/{{\partial \boldsymbol{\phi}}^{L_l}_i}$ to the servers to compute $\Tilde{\boldsymbol{g}}_{\boldsymbol{\phi}^{L_l}_{i}}$, and hence obtain the estimation $\widehat{\boldsymbol{g}}_{\textbf{H}^{(i), L_l}_b} $. We then approximate $\widehat{\boldsymbol{g}}_{\textbf{H}^{(i)}_b} \approx \widehat{\boldsymbol{g}}_{\textbf{H}^{(i), L_l}_b}$, and update all LoRA layers by $\widehat{\boldsymbol{g}}_{\textbf{H}^{(i)}_b} $. Here we set $r= 8$ and ${\alpha}_{\text{LoRA}} = 1$ for all LoRA layers across all clients in our experiments. 


{\paragraph{Parameters}~In our experiments, we use cross-entropy as the criterion and SGD as the optimization algorithm. We fix $B = 128$, and set $E_\emph{num} $ to be 20 and 100, respectively. For DP testing, we set $\ell_2$ norm bound $\gamma = 1.2$, $\delta = 10^{-5}$, and various privacy budgets with $\epsilon = 1, 2, 5, 8 , 10$. For the BMF mechanism, we consider two sets of learning parameters:
\begin{itemize}
    \item \textbf{Setting 1}:~$(\alpha = 1, \beta = 0)$
    \item \textbf{Setting 2}:~$(\alpha = 1, \beta = 0.9)$
\end{itemize}
Besides, we set learning rate $\eta_s = \eta_i = 0.01$ on both server and clients for $\textbf{Plain}$, and $\eta_s = 0.01$ on servers for $\textbf{G-Shuff}$ and $\textbf{G-BMF}$. For $\textbf{GL-BMF}$, we set $\eta_s = 0.01 $ and $\eta_i = 0.001$ for $i \in [1, N]$. Here we use smaller $\eta_i $ to mitigate the impact of the estimation error contained in the estimated per-sample gradients. All the reported test accuracies are averaged over \textbf{four} independent runs. 

\paragraph{DP Accounting}~For DP accounting, for the subsampling-based protocol, we follow the DP-accounting and composition summarized in~\cite{abadi2016privacy,Mironov2017renyi} and implemented using DP library Opacus~\cite{opacus}; For the BMF-based protocols, we use Google’s FFT based DP accounting library~\cite{gdpl}, following the exact setup in~\cite{Kalinin2024banded}.
}
\paragraph{Local Differential Privacy~(LDP)}
In the VFL setting with LDP, each individual perturbs their own output by adding noise, without relying on a trusted party. To provide a direct comparison with our method, we adopt LDP as one of the baselines. Same as the work in~\cite{wu2024federated}, we have each client independently {privatize} its data before transmitting to the server. Specifically, each client applies norm clipping to its per-sample intermediate outputs $\interout^{(i)}_{b, j}$ for $j \in [1, B]$ to bound sensitivity. A calibrated Gaussian noise is then added to each clipped output. As a result, each party releases only perturbed intermediate outputs to the servers. However, unlike~\cite{wu2024federated}, which utilizes subsampling enabled by fuzzy linkage (one-to-many linkage), we adopt the conventional VFL setting where each primary sample has exactly one predetermined match in other parties. Hence the sample-level participation anonymity among clients is no longer preserved after data alignment, and privacy amplification that relies on subsampling is also no longer applicable. We use $\textbf{LDP-G}$ and $\textbf{LDP-GL}$ to denote the mechanisms for updating global model only and global-local models both under LDP, respectively. 
{\paragraph{Existing Approaches}~We further compare our protocols against existing privacy-preserving VFL approaches, which can be viewed as \emph{variants} of LDP under \emph{different} trust assumptions, used as baselines. Specifically, we consider \textbf{ADMM}~\cite{xie2024admm}, \textbf{Split Learning}~\cite{vepakomma2018splitlearning}, and \textbf{FedBCD}~\cite{liu2022fedbcd} as in~\cite{xie2024admm}, along with \textbf{Ada-VFed}~\cite{gai2025differentially}. To ensure a fair comparison, we adopt the same experimental configuration as \textbf{Ada-VFed}:~In particular, two clients each deploy a ResNet-18 model locally, and the server employs an MLP layer as the global model. The clipping threshold is set to $\gamma = 2$, the number of epochs $E_{num} = 20$, the batch size $B = 128$, and various privacy budgets with $\epsilon = 1, 4, 8$. All methods are evaluated on the CIFAR-10 dataset. For \textbf{ADMM} and \textbf{FedBCD}, we evaluate multiple numbers of local update steps, including $\{1, 5, 10, 30\}$, and fix the value to 5, as it achieves the best performance among these settings. Similarly, after extensive hyperparameter tuning, we fix $\rho = 0.5$ for \textbf{ADMM}.

\paragraph{Membership Inference Attacks~(MIAs)}~MIAs are privacy attacks which aim {to determine} whether a target data point $(x,y)$, was included in the given model’s training set. As described in Section~\ref{sec:security_definition}, we assess the robustness of our privacy-preserving protocols against these adversaries by conducting black-box membership inference attacks with the Privacy Meter introduced in~\cite{mlprivacymeter}. Specifically, we use the enhanced Population Attack~(\textbf{P-Attack}) proposed in~\cite{ye2022enhanced} to evaluate the privacy leakage. We adopt the same attack setting as in~\cite{ye2022enhanced,bao2025deep}, where the validation dataset which is never used by the model during training, is used to construct the population distribution and compute the attack threshold. When evaluating attacks, we keep using {a} balanced evaluation dataset:~$50\%$ member~(from training dataset) and $50\%$ non-member~(from test dataset).
}
\subsection{Benchmarks}
\begin{table*}[!htbp]
\centering
{\small
\begin{tabular}{cc|ccccc|ccccc}
\toprule
\multirow{1}{*}{\textbf{Dataset}}    & \multirow{1}{*}{\textbf{Method}}          
&\multicolumn{5}{c|}{\textbf{Pre-ImageNet}} &\multicolumn{5}{c}{\textbf{Pre-Cifar100}} \\
\midrule
\multirow{9}{*}{CIFAR-10}  & &\multicolumn{5}{c|}{\textbf{Non-DP}} &\multicolumn{5}{c}{\textbf{Non-DP}} \\



  & $\text{Plain}$   &\multicolumn{5}{c|}{90.01 } &\multicolumn{5}{c}{91.42}  \\
 
&  &\multicolumn{5}{c|}{\textbf{DP}} &\multicolumn{5}{c}{\textbf{DP}} \\
& & ${\varepsilon=1}$            & ${\varepsilon=2}$           & ${\varepsilon=5}$ & ${\varepsilon=8}$   & ${\varepsilon=10}$     & ${\varepsilon=1}$            & ${\varepsilon=2}$           & ${\varepsilon=5}$    & ${\varepsilon=8}$     & ${\varepsilon=10}$  \\

 & $\text{G-Shuff}$  & 69.38 & 71.78 & 73.27 & 73.69 & 73.76 &81.19 & 81.51 & 81.62 & 81.67 & 81.69 \\
  & $\text{G-BMF}$~(Setting~1)   & 69.25 & 70.73 & 71.19 & 71.22 & 71.24 &77.22& 77.28 & 77.31& 77.34 & 77.37\\
  & $\text{G-BMF}$~(Setting~2)   & 51.74 & 58.69 & 67.29 & 70.27 & 71.59 & 72.03 & 77.39 & 80.68 & 81.12 & 81.38 \\
& $\text{GL-BMF}$~(Setting~1)  & 69.63 & 71.09 & 71.48 & 71.53 &71.56 & 77.48 & 77.55 & 77.77 & 77.81 & 77.82 \\
& $\text{GL-BMF}$~(Setting~2)  & 54.41 & 60.45 & 68.19 & 71.01 &72.34 & 72.58 & 77.86 & 81.11 & 81.39 & 81.42 \\
\midrule
\multirow{9}{*}{EMNIST}  & &\multicolumn{5}{c|}{\textbf{Non-DP}} &\multicolumn{5}{c}{\textbf{Non-DP}} \\

  & $\text{Plain}$   &\multicolumn{5}{c|}{94.12} &\multicolumn{5}{c}{93.75}  \\
 
&  &\multicolumn{5}{c|}{\textbf{DP}} &\multicolumn{5}{c}{\textbf{DP}} \\
& & ${\varepsilon=1}$            & ${\varepsilon=2}$           & ${\varepsilon=5}$   & ${\varepsilon=8}$    & ${\varepsilon=10}$    & ${\varepsilon=1}$            & ${\varepsilon=2}$           & ${\varepsilon=5}$   & ${\varepsilon=8}$    & ${\varepsilon=10}$      \\

 & $\text{G-Shuff}$ & 73.96 & 76.71 & 78.37 & 78.94 & 79.22 &78.58 & 78.87 & 79.02 &79.12 & 79.14\\
& $\text{G-BMF}$~(Setting~1)    & 73.62 & 74.53 & 74.89 & 75.03 &75.05  & 70.88 & 70.94& 70.99 & 71.04 & 71.09  \\
& $\text{G-BMF}$~(Setting~2)    & 73.44 & 78.95 & 81.48 & 81.99 &82.06 & 78.69 & 79.73 & 80.08 & 80.09 & 80.13 \\
& $\text{GL-BMF}$~(Setting~1)   & 74.09 & 74.84 & 75.12 & 75.25 & 75.28 & 71.06 & 71.15 & 71.19 & 71.21 & 71.23  \\
& $\text{GL-BMF}$~(Setting~2)   & 74.04 & 79.52 & 81.84 & 82.18 &82.27 & 78.99 & 79.94 & 80.19 & 80.21 & 80.24  \\
\bottomrule
\end{tabular}
}
\caption{Test Accuracy~($\%$) versus Different Datasets, Pre-training, and Privacy Budgets~(20 epochs)}
\label{tab1}
\end{table*}
\begin{table*}[!htbp]
\centering
{\small
\begin{tabular}{cc|ccccccc}
\toprule
\multirow{2}{*}{\textbf{Dataset}} & \multirow{2}{*}{$\epsilon$}          
&\multicolumn{7}{c}{\textbf{Method}} \\
 & & {\text{G-Shuff}} & $\text{G-BMF}$~(Setting~1)  & $\text{G-BMF}$~(Setting~2)  & $\text{GL-BMF}$~(Setting~1) &  $\text{GL-BMF}$~(Setting~2) &  $\text{LDP-G}$  &  $\text{LDP-GL}$\\
 \midrule
\multirow{5}{*}{CIFAR-10} &  $1$ & 63.71 & 65.14 & 49.14 & 65.97 & 51.32 & 10.29 & 10.56\\
&  $2$ & 69.09 & 70.66 & 55.16 & 71.23 & 56.26 & 11.14 & 11.36\\
&  $5$ & 71.93 & 73.63 & 61.47 & 74.09 & 62.29 & 16.69 & 17.14\\
&  $8$ & 72.89 & 74.25 & 64.42 & 74.59& 65.32 & 23.97 & 24.51\\
& $10$ & 73.88 & 74.41 & 65.85 & 74.78 & 66.61 & 28.16 & 28.89\\
\bottomrule
\end{tabular}
}
\caption{Test Accuracy~($\%$) versus Different Privacy Budgets, LDP~(100 epochs)}
\label{ldp}
\end{table*}
\begin{table*}[!htbp]
\centering
{\small
\begin{tabular}{cc|ccccccc}
\toprule
\multirow{2}{*}{\textbf{Dataset}} & \multirow{2}{*}{$\epsilon$}          
&\multicolumn{7}{c}{\textbf{Method}} \\
 & & {\text{G-Shuff}} & $\text{G-BMF}$  & $\text{GL-BMF}$ & $\text{ADMM}$ &  $\text{FedBCD}$ &  $\text{Split Learning}$  &  $\text{Ada-VFed}$\\
 \midrule
\multirow{3}{*}{CIFAR-10} &  $1$ & 66.71 & 67.14 & 68.49 & 65.79 & 68.22 & 29.66 & 61.31\\
&  $4$ & 71.25 & 70.96 & 72.52 & 68.84 & 70.43 & 34.41 & 62.13\\
&  $8$ & 72.37 & 71.91 & 73.63 & 70.03 & 71.68 & 40.37 & 63.50\\
\bottomrule
\end{tabular}
}
\caption{{Test Accuracy~($\%$) versus Different Privacy Budgets, Baselines~(20 epochs, $\gamma = 2$)}}
\label{baselines}
\end{table*}
\begin{table*}[!htbp]
\centering
{\small
\begin{tabular}{c|cccccc}
\toprule
\multirow{2}{*}{\textbf{Dataset}}          
&\multicolumn{6}{c}{\textbf{Method}} \\
 & {\text{Plain}}  & {\text{G-Shuff}} & $\text{G-BMF}$~(Setting~1)  & $\text{G-BMF}$~(Setting~2)  & $\text{GL-BMF}$~(Setting~1) &  $\text{GL-BMF}$~(Setting~2) \\
 \midrule
\multirow{1}{*}{CIFAR-10}  & 0.68 & 0.52 & 0.51 & 0.51 & 0.51 & 0.51 \\
\bottomrule
\end{tabular}
}
\caption{{AUC Score versus Different Protocols under \text{P-Attack}~(100 epochs, $\epsilon = 8$)}}
\label{auc}
\end{table*}
\begin{table*}[!htbp]
\centering
{\small
\begin{tabular}{cc|cc|cc}
\toprule
\multirow{2}{*}{\textbf{Dataset}}    & \multirow{2}{*}{\textbf{Method}}          
&\multicolumn{2}{c|}{\textbf{Pre-ImageNet}} &\multicolumn{2}{c}{\textbf{Pre-Cifar100}} \\
 & & {\textbf{Runtime}~(s)} & {\textbf{Communication}~(MB)} & {\textbf{Runtime}~(s)} &{\textbf{Communication}~(MB)} \\
\midrule
\multirow{4}{*}{CIFAR-10}
&  $\text{G-Shuff}$ & 14.88 & 473.63 & 7.80 & 244.03 \\
&  $\text{G-BMF}$ & 14.57 & 453.6 & 7.65 & 233.77 \\
& $\text{GL-BMF}$ & 132.8 & 6220.87 & 83.5 & 3186.62 \\
 \midrule
\multirow{4}{*}{EMNIST}
 & $\text{G-Shuff}$ & 32.80 & 1228.56 & 18.89 & 632.52 \\
 & $\text{G-BMF}$ & 32.00 & 1178.64 & 18.50 & 606.96 \\
 & $\text{GL-BMF}$ & 133.87 & 6945.87 & 89.33 & 3559.79 \\
\bottomrule
\end{tabular}
}
\caption{Runtime and Communication cost of the MPC over LAN}
\label{tab:mpclan}
\end{table*}
\paragraph{Utility-Privacy Evaluation.} Tab.~\ref{tab1} shows the utilities for different protocols under different privacy budgets after 20 training epochs. We observe that all our protocols yield good utilities for both scenarios and both datasets. In the \textbf{Pre-Cifar100} scenario, our protocols perform exceptionally well as the difference {in} final model accuracies between the plaintext upperbound and our privacy-preserving protocols is less than 15\% across all $\varepsilon$. The performance of BMF mechanisms can depend on the choice of hyperparameters in learning. We consider the better of Setting 1 and 2 as it shows the potential of the mechanism with the right choice of hyperparameters. The \textbf{Pre-ImageNet} scenario is more challenging because the images used for pretraining differ more from the test data than \textbf{Pre-Cifar100} in terms of image size and contents.
Nevertheless, our protocols still achieve competitive performances of less than 20\% accuracy gap in most cases except for small $\varepsilon=1,2$.

The second general pattern is that \textbf{GL-BMF} consistently outperforms \textbf{G-BMF} across all settings. This pattern indicates that local finetuning helps. {Although} the local models are only updated with \emph{noisy} gradients estimated from the published privacy-preserving information from the servers, the extracted information is still useful. Although the improvement is moderate because of the large amount of noise added in {the} BMF mechanism, we see potential in the idea of decoding local gradients from privacy-preserving public information. With the potential development of noise-adding mechanisms that consider our decoding step in the design, we may benefit further from local finetuning.  

For the choice between \textbf{G-Shuff} and \textbf{G-BMF}, we notice that $\textbf{G-BMF}$, with the right choice of hyperparameters, generally outperforms \textbf{G-Shuff}. However, \textbf{G-Shuff} outperforms other baselines in \textbf{Pre-Cifar100} -- \textbf{CIFAR-10}. The relative performance between \textbf{G-Shuff} and \textbf{G-BMF} is related to the similarity between the dataset for pre-training and the test dataset; the phenomenon is an intriguing direction for further exploration. The BMF mechanism supports local finetuning, which offers more potential room for improvement. On the other hand, the performance of \textbf{G-BMF} is sensitive to the choice of hyperparameters in learning. (See the difference between Setting 1 and 2.) One should also consider the privacy budget for model selection in more complex learning settings. 

In addition, we trace the performance of all our protocols with a longer training horizon ($E_{num} = 100$). The results are shown in Tab.~\ref{tab1_1} and the detailed discussion is given in \appref{add}.

We also compare our protocols with Local-DP under $E_{num} =100$. Since there is no prior work with end-to-end input/output privacy guarantees for general VFL, we use local-DP which can be regarded as the variant of~\cite{wu2024federated} without subsampling, as a privacy-preserving benchmark for utility. Tab.~\ref{ldp} shows the utilities for both our protocols and LDP for \textbf{CIFAR-10} in the \textbf{Pre-ImageNet} scenario. After 100 {epochs} of training, the performances of both \textbf{LDP-G} and \textbf{LDP-GL} are underwhelming. The LDP protocols add noise before any data and/or statistics leave a client's local storage. The {noise, which is} added early in the pipeline, {causes} difficulties for all downstream calculations. In contrast, our combination of MPC and DP is much more privacy-budget friendly as the privacy-preserving {noise is} added {to} aggregated statistics later in the pipeline. The significant gap between our protocols and Local-DP shows the advantage of our approach in privacy-utility trade-off.

{In Tab.~\ref{baselines}, we compare our protocols with other baselines described in Section~\ref{sec:experiments}. All the approaches are evaluated on the CIFAR-10 dataset under \textbf{Pre-ImageNet} scenario. For approaches from~\cite{xie2024admm}, we set $d_{\interout^{(i)}} = 60$ to remain consistent with their original configuration. Here both \textbf{G-BMF} and \textbf{{GL-BMF}} are under Setting~1. In general, our protocols consistently outperform \textbf{Ada-VFed} and \textbf{Split Learning}:~For example, at $\epsilon = 8$, they achieve a minimum of 8\% higher performance than \textbf{Ada-VFed}, and an even greater improvement over \textbf{SplitLn}. This result indicates that, on complex datasets such as CIFAR-10, the Laplacian Score metric used in \textbf{Ada-VFed} fails to capture meaningful feature importance, resulting in performance degradation. We also observe that all our protocols outperform \textbf{ADMM} across all privacy budget settings. For \textbf{FedBCD}, under the strict privacy constraint ($\epsilon = 1$), it outperforms \textbf{G-shuff} and \textbf{G-BMF} and is competitive with \textbf{GL-BMF}. However, when under the relaxed privacy constraints ($\epsilon \geq 4$), all our protocols perform better than \textbf{FedBCD}, with \textbf{GL-BMF} showing particular improvement. 

Furthermore, we find that \textbf{FedBCD} performs better than \textbf{ADMM} in our experiments, contrary to the reported results in~\cite{xie2024admm}. 
\textbf{ADMM}'s sophisticated consensus mechanism may be sensitive to changes in the experiment configuration, such as the size of clients and local model architecture.  
This degradation can also be explained by \textbf{ADMM}'s sensitivity to hyperparameters:~The penalty parameter $\rho$ and dual variable updates must be extensively re-tuned for each different setting, which limits its practical applicability for general usage. In contrast, \textbf{FedBCD} is more robust, as it has fewer hyperparameters to tune and its block coordinate descent enables efficient small-scale coordination. 

The extended comparisons with additional baseline methods and further discussion are provided in~\appref{add}.

Tab.~\ref{auc} reports the effectiveness of the population-based membership inference attack (P-Attack) against our protocols compared to the \textbf{Plain} protocol, for \textbf{CIFAR-10} in the \textbf{Pre-ImageNet} scenario. All evaluations are conducted under a privacy budget of $\epsilon = 8$, representing a relatively loose privacy constraint. Under this setting, the plaintext model reaches an AUC of 0.68, whereas all of our protocols keep the AUC at 0.52 or below, demonstrating strong resistance to the membership inference attack.
}

\paragraph{MPC Running Evaluation Results} Tab.~\ref{tab:mpclan} shows the MPC runtime and communication cost for different protocols over LAN. Note the results presented in both Tab.~\ref{tab:mpclan} and Tab.~\ref{tab:mpcwan}~(see in~\appref{add}) are for a single batch. For simulating the runtime of \textbf{G-Shuff} at the batch-level, we first measure the runtime of shuffling on $\ash{\overline{\textbf{H}}}$, and then divide the runtime by $B_{num}$. The shuffling runtime at the batch-level is then added {to} the time spent in the {rest} of the protocol for a single batch. To the best of our knowledge, there have not been any works performing training of ResNet-18 completely in MPC. Previous works have only performed training on neural networks such as AlexNet~\cite{EPRINT:KelSun22}, which is a much smaller network. Adapting these frameworks to support the complexities of ResNets such as skip connections is non-trivial. 

Unsurprisingly, both the \textbf{G-Shuff} and \textbf{G-BMF} protocols are much faster than \textbf{GL-BMF}. This is due to the $B\times N$ ($B$ per client) additional matrix multiplications needed in the \textbf{GL-BMF} protocol, for computing $ \ash{\boldsymbol{g}_{\boldsymbol{\phi}^{L}_{i}, j}}$ with $j \in [1, B]$. In general, the online times and communications for the \textbf{G-Shuff} and \textbf{G-BMF} are very close, {as the primary difference between the two is} 
due to the extra shuffling step conducted at each epoch under the \textbf{G-Shuff} variant. But we should also note after accumulating the per-batch runtime to per-epoch runtime, {on} average, \textbf{G-Shuff} is still more time-consuming than \textbf{G-BMF}:~Such as when under \textbf{Pre-ImageNet} \text{--} \textbf{CIFAR-10}, the difference of the runtime at epoch level gets to around $97$ seconds, which is non-negligible. Moreover, all observations {summarized in} Tab.~\ref{tab:mpclan} also appear in Tab.~\ref{tab:mpcwan}. 

\section{Conclusion}

In conclusion, we propose a novel end-to-end privacy-preserving framework,
instantiated by three efficient protocols for different deployment
scenarios, addressing both input and output privacy, for VFL. Our method avoids the high computation and communication costs of naive MPC-based solutions while maintaining model utility, making it practical for real-world deployments. We also observe that the improvement of our global-local model update protocol over the pure global model update protocol is limited. This is because the strong noise introduced by BandMF at each step significantly impairs the linear estimation, which in turn degrades the final model performance. This observation motivates another direction for future work:~By developing more efficient DP mechanisms that enhance utility, our PP-VFL protocol could achieve substantial performance gains.  


\begin{acks}
This research received no specific grant from any funding agency in the public, commercial, or not-for-profit sectors. The authors used generative AI-based tools to revise the text, improve flow and correct any typos, grammatical errors, and awkward phrasing.
\end{acks}

\bibliographystyle{ACM-Reference-Format}
\bibliography{main.bib,abbrev3.bib,shrunk.bib}

\appendix
\section{Other Related Work}\label{app:related_work}

\textbf{Secure Aggregation and PPML via MPC.}~A core strand of research in the privacy-preserving machine learning (PPML) community leverages secure multi-party computation (MPC) to enable collaborative model training while preserving input privacy across different settings. Several works have developed MPC-based frameworks to address practical PPML challenges, including floating-point arithmetic, efficient dot-product computation, and non-linear activation functions
~\cite{demmler2015aby,payman2018aby3,CCS:ABFKLO18,tetrad,C:EGKRS20}. However, these techniques are not specifically tailored for FL or VFL, where data partitions and interaction patterns differ significantly from conventional PPML scenarios.

\textbf{MPC \& DP.}~A smaller subset of the community has explored combining MPC protocols with differential privacy techniques, aiming to achieve complementary advantages of both: secure protocol execution and strong formal privacy bounds~\cite{CCS:BohKer21,CCS:ChasheUll19,FC:EIKN21,das2025communication}. However, to our knowledge, these combined approaches have not been systematically studied in the setting of vertical federated learning.

\textbf{FL \& MPC.}~In the setting of federated learning, MPC has been used to enable protocols for secure aggregation~\cite{Bonawitz2017practical}, which focuses on protecting individual local models' updates within FL. Protocols such as \cite{CCS:BBGLR20,SP:MWAPR23,SP:LBVKH23,CCS:CGJv22} focus on the single-server setting, while \cite{SP:RSWP23,scionfl} {consider} the multi-server setting. 
While these works have demonstrated promising performance, they predominantly consider horizontally partitioned data (i.e., each party holds the same set of features but on different subsets of samples), and it is non-trivial to extend them to the vertically split scenario. Recent work in~\cite{qiu2023efficient,qiu2024secure} {proposes} a secure {layer-based} framework for training vertical FL securely through secure aggregation. Although this framework can protect the private data, the output privacy of the trained {model} is still not guaranteed, which makes the framework {vulnerable} to attacks such as MIAs. 

\textbf{FL \& HE.}~Another line of research employs encryption-based techniques, particularly homomorphic encryption (HE), to preserve input privacy during model training. In~\cite{Froelicher2023scalable}, the authors propose a federated Principal Component Analysis (PCA) framework over distributed datasets using HE, but their method is limited to the horizontal FL setting and also doesn't consider the output privacy-preserving. In contrast, the works in~\cite{fate} consider a vertical FL-style setting and applies HE to protect both the upstream intermediate representations and also downstream gradients. In~\cite{xu2021fedv}, the authors exploits the functional encryption fore secure computation under VFL scenario. However, both of these approaches focused only on the input privacy and does not address the output privacy issues.

\textbf{DP in Horizontal FL.}~Previous works incorporate DP into horizontal federated learning~(HFL)~\cite{KLS22,fldp1}. These methods typically add calibrated noise to gradients during training, ensuring that individual client contributions remain obscured. Recent studies~\cite{mal2024noise,tim2022efficient} further focus on enhancing both the utility and scalability of output {privacy preservation} in HFL, going beyond conventional DP mechanisms. Such DP-based FL solutions are practical for deployment and have been extensively analyzed in terms of both privacy and utility. However, they are predominantly tailored to the HFL setting and cannot be directly extended to {the} VFL scenario. 

\section{Additional Details and Experimental Results}\label{app:add}
This section provides additional materials that complement the main manuscript, including the notation table~(Table~\ref{tab:notation}) used throughout the paper and an extended experimental results on communication and computation costs of the MPC over WAN~(Tab.~\ref{tab:mpcwan}). Besides, additional experimental results include:

\paragraph{Results under $E_{num}=100$}~Tab.~\ref{tab1_1} reports the utilities for different protocols after 100 training epoches. The general patterns are consistent. The BMF mechanism seems to benefit more from increasing training epoches. For example, in \textbf{Pre-ImageNet} -- \textbf{EMNIST}, the performance of \textbf{G-BMF} increases from 73.62 to 75.28 at $\varepsilon=1$, whereas the performance of \textbf{G-Shuff} does not improve. Recall that BMF adds correlated noises with the hope that the noises can "offset" among themselves over multiple epoches. Longer training horizon potentially benefit BMF. However, we must also note that the longer training horizon may favor a different choice of hyperparameters. For example, in \textbf{Pre-ImageNet} -- \textbf{EMNIST} at $\varepsilon = 2$, \textbf{G-BMF} used to perform better with the hyperparameters in Setting 1, but now achieves higher model accuracy with the hyperparameters in Setting 2. This phenomenon reminds us practitioners to carefully consider the characteristics of Gaussian v.s. BMF mechanisms.

{\paragraph{Extended Baselines Comparisons}~A central design principle of our protocols is to strictly preserve sample-level anonymity throughout training. Relaxing this constraint would enable the global-local training protocol \textbf{GL} to exploit privacy amplification by subsampling. Consistent with the setting in~\cite{xie2024admm}, the servers may randomly sample a subset of $B$ data indices at each step and broadcast them to the clients. Under this modification, \textbf{GL-BMF} can employ the subsampled Gaussian mechanism in place of the BMF mechanism. We denote this variant of \textbf{GL-BMF} as \textbf{GL-Sub}. We stress that this variant is \textbf{not} proposed as a new formal privacy-preserving protocol, as it introduces significant privacy risks by breaking the anonymity of sample participation. However, we introduce this variant solely as a special baseline for comparison with our formal \textbf{GL-BMF} protocol. While \textbf{GL-Sub} relies on much different trust and privacy assumptions, it employs the same denoising techniques as we used in \textbf{GL-BMF}. 

We follow the same experimental settings as in Tab.~\ref{baselines}, but set the number of training epochs to $E_{\text{num}} = 100$ and $\epsilon = 4$. We compare \textbf{GL-BMF} (under Setting 1) with \textbf{GL-Sub}, as well as \textbf{ADMM} and \textbf{FedBCD}. The results in Tab.~\ref{extend} show that \textbf{GL-BMF} continues to outperform both \textbf{ADMM} and \textbf{FedBCD} over a longer training horizon. However, we can also clearly find that by utilizing privacy-amplification, \textbf{GL-Sub} outperform all other protocols. This is because when under weaker noise perturbation (the subsampling-based Gaussian mechanism), the denoising procedure applied to the private gradients is more effective than under the BMF mechanism, which injects stronger noise at each step, resulting in an overall boost in utility. We leave further discussion to the readers to assess which privacy condition is most relevant to them and to determine the privacy–utility trade-off that best fits their practical applications.
}
\begin{table*}[h!]
\centering
\caption{Notation Table}
\begin{tabular}{ll}
\toprule
\textbf{Symbol} & \textbf{Description} \\
\midrule
$\globalm$ & Global model \\
$\boldsymbol{\mathcal{F}}_i$ & Local model at client $i$\\ 
$\boldsymbol{\theta}$ & Parameters of the global model $\globalm$\\
$\boldsymbol{\phi}_i$ & Parameters of local model $\boldsymbol{\mathcal{F}}_i$ \\
$\boldsymbol{x}^{(i)}_{b, j}$ & Client $i$'s local data of the $j$-the sample in the $b$-th batch\\
$\interout^{(i)}_{b, j}$ & Output of local model $\boldsymbol{\mathcal{F}}_i$ over $\boldsymbol{x}^{(i)}_{b, j}$
\\
\bottomrule
\end{tabular}
\label{tab:notation}
\end{table*}
\begin{table*}[!htbp]
\centering
{\small
\begin{tabular}{cc|ccccc|ccccc}
\toprule
\multirow{1}{*}{\textbf{Dataset}}    & \multirow{1}{*}{\textbf{Method}}          
&\multicolumn{5}{c|}{\textbf{Pre-ImageNet}} &\multicolumn{5}{c}{\textbf{Pre-Cifar100}} \\
\midrule
\multirow{9}{*}{CIFAR-10}  & &\multicolumn{5}{c|}{\textbf{Non-DP}} &\multicolumn{5}{c}{\textbf{Non-DP}} \\



  & $\text{Plain}$   &\multicolumn{5}{c|}{93.5 } &\multicolumn{5}{c}{93.89 }  \\
 
&  &\multicolumn{5}{c|}{\textbf{DP}} &\multicolumn{5}{c}{\textbf{DP}} \\
& & ${\varepsilon=1}$            & ${\varepsilon=2}$           & ${\varepsilon=5}$ & ${\varepsilon=8}$   & ${\varepsilon=10}$     & ${\varepsilon=1}$            & ${\varepsilon=2}$           & ${\varepsilon=5}$    & ${\varepsilon=8}$     & ${\varepsilon=10}$  \\

 & $\text{G-Shuff}$  & 63.71 & 69.09 & 71.93 & 72.89 & 73.88 &80.16 & 82.11 & 82.85 & 82.98 & 83.02 \\
  & $\text{G-BMF}$~(Setting~1)   & 65.14 & 70.66 & 73.63 & 74.25 & 74.41 &79.51 & 80.41 & 80.57 & 80.59 & 80.64\\
  & $\text{G-BMF}$~(Setting~2)   & 49.14 & 55.16 & 61.47 & 64.42 & 65.85 & 66.83 & 71.31 & 78.18 & 80.61 & 81.49 \\
& $\text{GL-BMF}$~(Setting~1)  & 65.97 & 71.23 & 74.09 & 74.59 &74.78 & 79.79 & 80.75 & 80.81 & 80.94 & 81.05 \\
& $\text{GL-BMF}$~(Setting~2)  & 51.32 & 56.26 & 62.29 & 65.32 &66.61 & 67.44 & 71.87 & 78.66 & 80.99 & 81.83 \\
\midrule
\multirow{9}{*}{EMNIST}  & &\multicolumn{5}{c|}{\textbf{Non-DP}} &\multicolumn{5}{c}{\textbf{Non-DP}} \\

  & $\text{Plain}$   &\multicolumn{5}{c|}{94.26} &\multicolumn{5}{c}{94.51}  \\
 
&  &\multicolumn{5}{c|}{\textbf{DP}} &\multicolumn{5}{c}{\textbf{DP}} \\
& & ${\varepsilon=1}$            & ${\varepsilon=2}$           & ${\varepsilon=5}$   & ${\varepsilon=8}$    & ${\varepsilon=10}$    & ${\varepsilon=1}$            & ${\varepsilon=2}$           & ${\varepsilon=5}$   & ${\varepsilon=8}$    & ${\varepsilon=10}$      \\

 & $\text{G-Shuff}$ & 67.41 & 73.68 & 77.71 & 78.62 & 80.24 &77.86 & 79.55 & 80.08 & 80.17 & 80.19 \\
& $\text{G-BMF}$~(Setting~1)    & 75.28 & 79.11 & 80.57 & 80.82 & 80.88 &77.35& 77.95 & 78.09 & 78.14 & 78.17 \\
& $\text{G-BMF}$~(Setting~2)    & 68.12 & 74.51 & 80.94 & 82.76 &83.34& 77.39 & 81.59 & 83.62 & 84.05 & 84.14 \\
& $\text{GL-BMF}$~(Setting~1)   & 75.93 & 79.65 & 80.83 & 80.99 & 81.02 & 77.62 & 78.08 & 78.22 & 78.23 & 78.27  \\
& $\text{GL-BMF}$~(Setting~2)   & 68.95 & 75.22 & 81.58 & 83.21 & 83.68 & 77.81 & 81.86 & 83.79 & 84.15 & 84.26  \\
\bottomrule
\end{tabular}
}
\caption{Test Accuracy~($\%$) versus Different Datasets, Pre-training, and Privacy Budgets~(100 epochs)}
\label{tab1_1}
\end{table*}
\begin{table*}[!htbp]
\centering
{\small
\begin{tabular}{cc|cc|cc}
\toprule
\multirow{2}{*}{\textbf{Dataset}}    & \multirow{2}{*}{\textbf{Method}}          
&\multicolumn{2}{c|}{\textbf{Pre-ImageNet}} &\multicolumn{2}{c}{\textbf{Pre-Cifar100}} \\
 & & {\textbf{Runtime}~(s)} & {\textbf{Communication}~(MB)} & {\textbf{Runtime}~(s)} &{\textbf{Communication}~(MB)} \\
\midrule
\multirow{4}{*}{CIFAR-10}
&  $\text{G-Shuff}$ & 83.94 & 473.63 & 44.62 & 244.03 \\
&  $\text{G-BMF}$ & 82.60 & 453.6 & 43.92 & 233.77 \\
& $\text{GL-BMF}$ & 686.60 & 6220.87 & 468.19 & 3186.62 \\
 \midrule
\multirow{4}{*}{EMNIST}
 & $\text{G-Shuff}$ & 194.17 & 1228.56 & 108.34 & 632.52 \\
 & $\text{G-BMF}$ & 190.75 & 1178.64 & 106.65 & 606.96 \\
 & $\text{GL-BMF}$ & 725.44 & 6945.87 & 490.25 & 3559.79 \\
\bottomrule
\end{tabular}
}
\caption{Runtime and Communication cost of the MPC over WAN}
\label{tab:mpcwan}
\end{table*}
\begin{table*}[!htbp]
\centering
{\small
\begin{tabular}{c|cccc}
\toprule
\multirow{2}{*}{\textbf{Dataset}}          
&\multicolumn{3}{c}{\textbf{Method}} \\
 & {\text{GL-BMF}}  & {\text{GL-Sub}}  & {\text{ADMM}} & $\text{FedBCD}$  \\
 \midrule
\multirow{1}{*}{CIFAR-10} & 75.47 & 84.46 & 69.64 & 71.82  \\
\bottomrule
\end{tabular}
}
\caption{{Test Accuracy~(\%) versus Different Baselines~(100 epochs, $\epsilon = 4$, $\gamma = 2$)}}
\label{extend}
\end{table*}

\section{Multiparty Computation}
\label{app:mpc}

{\paragraph{Corruption} Correctness and privacy are guaranteed by assuming a certain type of adversary in the system. Depending on how we model the adversary there are different settings possible. The two big levers are the number of parties the adversary can control, and what the adversary can do with the corrupted parties. If the adversary can control only up to half of the parties ($t < K/2$) where $K$ is the number of the total parties, we are in an honest majority setting, whereas if it is higher than half ($t \geq K/2$), we are in dishonest majority. An adversary that only passively listens to the messages in the computation and tries to learn as much as possible from them, is a passive or semi-honest adversary. The adversary in this case will always send the correct message(s), and will not deviate from the expected behavior. If the adversary can behave in arbitrary ways, such as not sending a message, sending the wrong data, etc., it is referred to as an active or malicious adversary.

Active adversaries are much harder to deal with, and in some situations are considered overkill. For instance, when we are dealing with large companies performing a PPML task together, some have argued~\cite{matthew2025covert} that the reputational risk of actively deviating from the protocol is \emph{higher} than what could be gained by cheating. Passive protocols on the other hand are easier to design, and often serve as a first step towards building an actively secure protocol. This could be done by identifying the parts in the passive protocol that are vulnerable to active attacks, and proposing techniques to secure them.

Following the corruption threshold, in any MPC protocol, if more parties than the assumed threshold collude, security collapses and intermediate values can be exposed. Below this threshold, however, our MPC+DP protocol ensures that servers only see DP-protected outputs:~the clipped, aggregated, and noised gradients, providing an additional layer of privacy even if partial information is inadvertently revealed. In our MPC implementation, we deploy three servers tolerating one corrupted server, which balances efficiency and security; the protocol can be extended to more servers at higher computational cost. In general, security under a corruption-threshold assumption is all-or-nothing: no leakage occurs below the threshold, while full leakage is possible above it.

}
\paragraph{Secret-Sharing} We use the notation  to denote a value that is secret-shared between parties. In this work, we assume that values are secret-shared using a linear secret-sharing scheme, such as arithmetic secret-sharing, or replicated secret-sharing~\cite{CCS:AFLNO16}. If a value $x$ was secret-shared between three parties using replicated secret-sharing over a finite field $\F_p$, where $p$ is a prime, each party $P_i$ holds two secret-shares corresponding to $x$. For example, when , $P_0$ will hold $\ash{x}_1, \ash{x}_2$, $P_1$ holds $\ash{x}_2, \ash{x}_0$, and $P_2$ will hold $\ash{x}_0, \ash{x}_1$. 

The linearity of the secret-sharing scheme, such as the one mentioned earlier, allows for parties to compute additions of two secret-shared values locally. To compute a secret-sharing of $z = x + y$, each party simply adds the shares it holds corresponding to the individual values of $x, y$, resulting in $\ash{z}$. To compute a multiplication gate however, parties must communicate with each other. Protocols for efficient multiplication has received a lot of attention over the recent years, with \cite{CCS:AFLNO16,tetrad,USENIX:DalEscKel21,payman2018aby3,CCS:EGPS22} proposing efficient protocols for multiplication in the honest-majority setting, and works such as \cite{C:EGKRS20,C:RacSch22,C:CDESX18,payman2017secureml} to name a few in the dishonest-majority setting.


\paragraph{Pre-processing} MPC protocols sometimes involve using correlated randomness, such as Beaver triples~\cite{C:Beaver91b}, Oblivious Transfer (OT)~\cite{EPRINT:Rabin05}, or edaBits~\cite{C:EGKRS20} in our case. These are typically function-dependent but input-independent, which means that if the parties involved in the MPC know in advance what function they are going to compute, they can pre-process the correlated randomness required to carry out the MPC. Pre-processing takes a significant amount of time, especially for functions such as deep neural networks, so it makes a big difference in performance if one can take advantage of pre-processing. 


\paragraph{Mixed-mode Computation} Typically, computations in MPC are expressed as arithmetic ($\F_p$) or boolean circuits ($\F_2$), where $p$ is either a prime or $2^k$. Each of these computational domains has its advantages. For instance, performing multiplications is faster in the arithmetic domain, while an operation such as a secure comparison is more well suited for the boolean domain. The structure of neural networks is such that multiplications and non-linear operations such as comparisons and activation functions alternate. This makes it so that sticking to one of the two computational domains is suboptimal. One of the interesting techniques that has been proposed recently, in works such as \cite{demmler2015aby,payman2018aby3,payman2017secureml} is to have efficient ways to switch the secret-sharing domain from arithmetic to boolean and back, mid computation. These techniques allow for parties to take full advantage of the two domains.

{\subsection{Proof of Theorem~\ref{thm:input_privacy_main}}\label{sec:input_privacy_formal}

The proof, including a description of all building blocks and a simulator construction, is given below:

\paragraph{Building Blocks}
Most of our sub-protocols come from ABY3~\cite{payman2018aby3}, which is proven UC-secure against a static, semi-honest adversary corrupting at most one of three servers:
\begin{itemize}[leftmargin=*]
    \item $\PMult$ (multiplication over replicated secret shares): ABY3 uses the protocol from Araki et~al.~\cite{CCS:AFLNO16}, which is proven UC-secure in the $\mathcal{F}_{ABB}$-hybrid model.

    \item $\SHU$ (oblivious shuffle): built from local random permutations and re-sharing, both $\mathcal{F}_{ABB}$ operations.
    The construction follows Laur et~al.~\cite{laur2011shuffle} and is also used as a building block in the sorting protocol of Asharov et~al.~\cite{CCS:AHIKNPTT22} (Protocol~3.2); We further refer the reader to Theorem~4.6 in~\cite{CCS:AHIKNPTT22}, which provides a simulation-based proof of the full sorting functionality).

    \item $\PNM$ and $\PDiv$ (norm and division over fixed-point numbers): constructed by Aly and Smart~\cite{ACNS:AlySma19} in the $\mathcal{F}_{ABB}$-hybrid model using secure comparison, fixed-point division (from Catrina and Saxena~\cite{FC:CatSax10}), and bit decomposition, each independently proven secure under composition.

    \item $\PAct$ (activation functions): in our protocols the global model uses a softmax activation, which is computed from the exponentiation and division protocols of Aly and Smart~\cite{ACNS:AlySma19}, the same $\mathcal{F}_{ABB}$ operations used by $\PNM$ and $\PDiv$ above.

    \item $\PGS$ (secure Gaussian noise generation): we use the specific instantiation from \cite{keller2024secure}, which is built from arithmetic and boolean operations with secure domain conversions, which we use from ABY3~\cite{payman2018aby3}.

    \item $\FWD$ and $\BWD$ (forward and backward pass): sequential compositions of $\PMult$ and $\PAct$.
\end{itemize}
All subprotocols are instantiated in the MP-SPDZ~\cite{CCS:Keller20} framework.

\paragraph{No Leakage from Intermediate Values}
No secret-shared value is ever reconstructed between sub-protocol calls.
The output shares of one subprotocol (e.g., $\SHU$) are passed directly as input shares to the next (e.g., $\FWD$).
Since each server only ever holds its replicated shares, which are uniformly random and independent of the underlying secret~\cite{CCS:AFLNO16}, no information is leaked at the transitions between subprotocols.
The only reconstruction in the entire protocol occurs at the final step, when the DP-protected output is revealed.

Building on the above discussion, we first present the proof for protocol \PFirstA, which can be easily extended to the remaining protocols.

\begin{corollary}[Input privacy of $\PFirstA$]                         
$\PFirstA$ (\figref{pfirst1}) is secure against a static, semi-honest adversary corrupting at most one of   
the three servers, in the $\mathcal{F}_{ABB}$-hybrid model. 
\end{corollary}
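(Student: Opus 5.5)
The approach is a standard simulation argument in the $\mathcal{F}_{ABB}$-hybrid model, followed by the UC composition theorem~\cite{canetti2001universally}. Fix a corrupted server $P_c$ with $c \in \{1,2,3\}$. I will build a simulator $\Sim$ that receives only $P_c$'s input shares and the final output, namely the reconstructed DP-protected global model $\globalm$. From these it produces a view indistinguishable from $P_c$'s real view.

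The real view is made of three parts:
\begin{enumerate}[label=(\roman*)]
    \item the replicated shares of $\ash{\textbf{H}^{(i)}_b}$ and $\ash{\textbf{y}_b}$ received from the clients;
    \item the messages received inside each sub-protocol call: $\PGS$ in preprocessing, $\SHU$ once per epoch, and, at every step, $\FWD$, $\BWD$, $\PNM$, $\PDiv$, the local additions and the noise addition;
    \item the opening messages of the final reconstruction of $\globalm$.
\end{enumerate}

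I would construct $\Sim$ in the following order.

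\textbf{Client shares.} Under replicated secret sharing with $t=1$, $P_c$'s two shares of any secret are uniformly random and independent of the secret~\cite{CCS:AFLNO16}. So $\Sim$ samples uniform ring elements of the right dimensions for every client upload.

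\textbf{Sub-protocol calls.} For each call, in the order prescribed by \figref{pfirst1}, $\Sim$ invokes that sub-protocol's simulator:
\begin{itemize}
    \item for $\PMult$ and hence $\FWD$ and $\BWD$, the simulator of~\cite{CCS:AFLNO16,payman2018aby3};
    \item for $\SHU$, the simulator of~\cite{laur2011shuffle} or Theorem~4.6 of~\cite{CCS:AHIKNPTT22};
    \item for $\PNM$, $\PDiv$ and $\PAct$, the simulators of~\cite{ACNS:AlySma19,FC:CatSax10};
    \item for $\PGS$, the simulator of~\cite{keller2024secure}.
\end{itemize}
Each of these simulators needs only $P_c$'s input and output shares of that call. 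As argued above, no value is reconstructed between calls, so these shares are themselves uniformly random. $\Sim$ therefore feeds each simulator fresh random shares as outputs and carries them forward consistently as inputs to the next call. Purely local steps (concatenation, splitting into batches, summing clipped gradients, adding the noise row $\ash{\boldsymbol{\mathcal{N}}_{\text{Tab}}}_{[t,:]}$, the SGD update) are deterministic functions of shares $P_c$ already holds. They add nothing to the view.

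\textbf{Final opening.} For the reconstruction of $\globalm$, $\Sim$ knows the output and $P_c$'s two shares of it. It sets the missing third share to the value that makes the opening consistent. This is exactly the opening simulation for replicated sharing.

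Indistinguishability then follows by a hybrid argument. Starting from the real execution, replace one sub-protocol invocation at a time by its ideal functionality plus its simulator. Each step is indistinguishable by that sub-protocol's security. Because there are polynomially many calls ($T_\emph{num}$ steps times a constant number of calls per step, plus $E_\emph{num}$ shuffles), the total distinguishing advantage stays negligible. Equivalently, the composition theorem applied to the sequential composition gives the result directly.

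The main obstacle is the \textbf{seams between sub-protocols}. One must check that every intermediate value is only ever an output share handed straight to the next $\mathcal{F}_{ABB}$ call and is never opened. In particular, the clipping factor $\gamma_j = \msf{max}(1, n/\gamma)$ has to be computed via a secure comparison whose outcome stays secret-shared, rather than by a branch on a revealed bit. Likewise, the permutation inside $\SHU$ must stay hidden from $P_c$, which holds because at least one permutation is applied by an honest party. Once these points are confirmed, the remaining simulation is routine bookkeeping.
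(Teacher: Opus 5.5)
Your proposal is correct and takes essentially the same approach as the paper. It builds a simulator for the corrupted $P_c$ that runs each $\mathcal{F}_{ABB}$ sub-protocol's own simulator on $P_c$'s input and output shares and computes local share operations deterministically, relies on the fact that no value is opened between sub-protocol calls, and concludes by UC composition. Your explicit simulation of the final opening of $\globalm$ and your checks at the seams (the secret-shared comparison in clipping, the hidden shuffle permutation) are slightly more careful than the paper, whose $\Sim$ is given only $P_c$'s output share. One minor inconsistency: if $\Sim$ is handed $P_c$'s input shares, it should place those in the view rather than resample the client uploads.
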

\begin{proof}
Every operation in $\PFirstA$ such as secret sharing, shuffling, forward pass, backpropagation, gradient clipping, noise addition, and model update, is realized by $\mathcal{F}_{ABB}$ operations as established above.
By the UC composition theorem~\cite{FOCS:Canetti01}, the composed protocol is secure in the $\mathcal{F}_{ABB}$-hybrid model.

\paragraph{Simulator Construction}
We construct a simulator~$\Sim$ for the corrupted server~$P_c$ ($c\!\in\!\{1,2,3\}$).
Following the semi-honest simulation paradigm, $\Sim$ is given $P_c$'s \emph{input} (the collection of secret shares received from clients) and $P_c$'s \emph{output} (its share of the trained global model), and must produce a simulated \emph{view} consisting of $P_c$'s input, random tape, and all protocol messages received by $P_c$, that is computationally indistinguishable from $P_c$'s real view.

$\Sim$ then proceeds as follows:
\begin{enumerate}[leftmargin=*]
    \item $\Sim$ samples a uniformly random string~$r$ of appropriate length as $P_c$'s simulated random tape.

    \item For each $\mathcal{F}_{ABB}$ operation, $\Sim$ generates the messages $P_c$ would have received by running the operation's guaranteed simulator on $P_c$'s input and output shares for that operation.

    \item Local operations (additions, public-scalar multiplications) on shares are deterministic functions of the shares already present in the view, so $\Sim$ computes these consistently.
    By the linearity of replicated secret sharing~\cite{CCS:AFLNO16}, each party's shares remain uniformly distributed and independent of the underlying secret.
\end{enumerate}

\paragraph{Indistinguishability}
The simulated view is computationally indistinguishable from $P_c$'s real view because each $\mathcal{F}_{ABB}$ operation's simulator produces an indistinguishable transcript, and all local share operations are deterministic functions of identically distributed inputs.
\end{proof}

Typically, the same argument extends to the remaining protocols:

\begin{corollary}[Input privacy of $\PFirstB$]\label{cor:input_privacy_bmf}
$\PFirstB$ (\figref{pfirst2}) is secure against a static, semi-honest adversary corrupting at most one server, in the $\mathcal{F}_{ABB}$-hybrid model.
\end{corollary}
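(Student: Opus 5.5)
The plan is to reuse the argument for the corollary on $\PFirstA$ and isolate where $\PFirstB$ differs. Comparing \figref{pfirst2} with \figref{pfirst1}, there are three differences: (i) the secure shuffle $\SHU$ is removed, (ii) in preprocessing, $\PGS$ is called with a different public scale $\sigma\gamma\,\text{sens}_{\kappa,\text{b}}(\boldsymbol{\Omega})$, and (iii) the servers compute $\ash{\boldsymbol{\mathcal{N}}_{\text{Corr}}}=\boldsymbol{\Omega}^{-1}\ash{\boldsymbol{\mathcal{N}}_{\text{Tab}}}$ and add its rows to the aggregated gradient. Every other step is literally the same as in $\PFirstA$: client secret-sharing, concatenation, $\FWD$, $\BWD$, $\PNM$, $\PDiv$, local aggregation, noise addition and the model update. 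So I will first argue that each new or changed step is realized by $\mathcal{F}_{ABB}$ operations, and then apply UC composition as before.

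\textbf{Realizing the changed steps.} Removing $\SHU$ only removes a sub-protocol call and its simulator, so it needs no argument. $\PGS$ takes a public scale, and the instantiation from \cite{keller2024secure} is secure for any public parameter. For step (iii), $\boldsymbol{\Omega}$ is the public $p$-BSR matrix: it is computed during preprocessing from the public hyperparameters $\alpha,\beta,p$ and given to the servers in plaintext. It is lower triangular with $c_0=1$, so it is invertible and $\boldsymbol{\Omega}^{-1}$ is also public. Hence $\boldsymbol{\Omega}^{-1}\ash{\boldsymbol{\mathcal{N}}_{\text{Tab}}}$ is a public linear map applied to shares. By linearity of replicated secret sharing, this is a purely local computation: public-scalar multiplications and additions, with no communication. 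Row selection $[t,:]$ is also local. In fixed-point arithmetic, multiplying by non-integer public entries needs a probabilistic truncation. That truncation is itself an $\mathcal{F}_{ABB}$ operation with a known simulator, so it fits the same framework.

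\textbf{Simulator.} I would build the simulator $\Sim$ for a corrupted $P_c$ exactly as in the $\PFirstA$ corollary:
\begin{enumerate}
\item sample $P_c$'s random tape;
\item for each interactive $\mathcal{F}_{ABB}$ call ($\PGS$, $\PMult$ inside $\FWD$/$\BWD$, $\PNM$, $\PDiv$, truncations), invoke that call's simulator on $P_c$'s input and output shares;
\item recompute the local public-linear steps, including the $\boldsymbol{\Omega}^{-1}$ multiplication, deterministically from shares already in the simulated view.
\end{enumerate}
Indistinguishability then follows from a hybrid argument over the sub-protocol calls in order. Step (1) outputs a uniform tape, as in the real execution. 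Each call's simulator produces a transcript indistinguishable from the real one. The local steps are identical functions of identically distributed shares, so they preserve indistinguishability. As before, the only reconstruction is the released DP-protected model, which is part of the output.

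\textbf{Expected obstacle.} The one point needing care is that the correlated noise no longer has i.i.d.\ rows. I need to check that this does not introduce dependence between a corrupted server's shares and the secret. It does not. Each share of $\boldsymbol{\mathcal{N}}_{\text{Corr}}$ is a fixed public linear combination of shares of $\boldsymbol{\mathcal{N}}_{\text{Tab}}$. These shares are uniformly random and independent of the secret, and that property is preserved under public linear maps of the shares. So the correlation structure affects only the DP analysis of Theorem~\ref{them2}, not input privacy.
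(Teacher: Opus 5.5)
Your proposal is correct and takes essentially the same route as the paper. The paper reuses the $\PFirstA$ simulation argument with $\SHU$ removed, and observes that $\ash{\boldsymbol{\mathcal{N}}_{\text{Corr}}}=\boldsymbol{\Omega}^{-1}\ash{\boldsymbol{\mathcal{N}}_{\text{Tab}}}$ is a local linear operation on shares by the public matrix $\boldsymbol{\Omega}^{-1}$. Your remark that fixed-point multiplication by non-integer public entries needs a truncation (itself an $\mathcal{F}_{ABB}$ operation with its own simulator) slightly sharpens the paper's claim that no additional sub-protocol call is needed, and leaves the conclusion unchanged.
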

\begin{proof}
The proof for protocol $\PFirstB$ is identical to the proof for $\PFirstA$, with $\SHU$ absent (since $\PFirstB$ does not shuffle). The BandMF noise computation $\ash{\boldsymbol{\mathcal{N}}_{\text{Corr}}}=\boldsymbol{\Omega}^{-1}\ash{\boldsymbol{\mathcal{N}}_{\text{Tab}}}$ involves only a local linear operation on shares (as $\boldsymbol{\Omega}^{-1}$ is a public matrix) and requires no additional sub-protocol call.
\end{proof}

\begin{corollary}[Input privacy of $\PSecond$]\label{cor:input_privacy_gl}
$\PSecond$ (\figref{psecond}) is secure against a static, semi-honest adversary corrupting at most one server, in the $\mathcal{F}_{ABB}$-hybrid model.
\end{corollary}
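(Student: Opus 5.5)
The plan is to reduce the claim to the proof of the corollary for $\PFirstA$ by showing that every operation in $\PSecond$ either already occurs in $\PFirstB$ or is a new $\mathcal{F}_{ABB}$ operation covered by the same composition argument. The argument then needs only two additions: accounting for the extra client inputs, and handling the per-step reconstruction of $\Tilde{\boldsymbol{g}}_{\boldsymbol{\phi}^{L}_{i}}$, which is an \emph{output} of the protocol rather than an intermediate value.

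\textbf{Step 1: enumerate the server-side operations.} The servers' work in $\PSecond$ consists of:
\begin{enumerate}
\item receiving shares $\ash{\textbf{H}^{(i)}_b}$, $\ash{\textbf{y}_b}$ and $\ash{\partial \interout^{(i)}_{b,j}/\partial \boldsymbol{\phi}^{L}_i}$, which are input shares, now arriving at every step instead of once;
\item local concatenation;
\item $\FWD$ and $\BWD$, exactly as in $\PFirstB$;
\item the chain-rule products in Eq.~\eqref{eq:multip}, which are calls to $\PMult$;
\item concatenation into $\ash{\boldsymbol{g}_j}$, followed by clipping via $\PNM$ and $\PDiv$;
\item local aggregation and addition of the BandMF noise $\ash{\boldsymbol{\mathcal{N}}_{\text{Corr}}}=\boldsymbol{\Omega}^{-1}\ash{\boldsymbol{\mathcal{N}}_{\text{Tab}}}$, a public linear map on shares, as argued in Corollary~\ref{cor:input_privacy_bmf};
\item the update of $\ash{\globalm}$;
\item opening $\Tilde{\boldsymbol{g}}_{\boldsymbol{\phi}^{L}_{i}}$ toward client $E_i$.
\end{enumerate}
Each item is either local or an $\mathcal{F}_{ABB}$ operation already listed among the building blocks. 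By the UC composition theorem~\cite{FOCS:Canetti01}, the composed protocol is therefore secure in the $\mathcal{F}_{ABB}$-hybrid model.

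\textbf{Step 2: extend the simulator $\Sim$ for $P_c$.} Item~8 is the one genuinely new point, and I expect it to be the main obstacle: the protocol now reconstructs values at \emph{every} step, so the earlier claim that "the only reconstruction occurs at the final step" no longer holds. I would treat the sequence $\{\Tilde{\boldsymbol{g}}_{\boldsymbol{\phi}^{L}_{i}}\}_{t,i}$ as part of the functionality's output, so that $\Sim$ is given these values. Since the reconstruction is an $\mathcal{F}_{ABB}$ output operation, $\Sim$ can produce the opening messages seen by $P_c$ from $P_c$'s own share and the output value, using the standard reconstruction simulator: under replicated sharing, the missing share is determined by the output and the shares $P_c$ already holds.

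With that settled, $\Sim$ is built as in the proof for $\PFirstA$:
\begin{itemize}
\item sample a random tape;
\item run the guaranteed sub-simulators for $\PMult$, $\FWD$, $\BWD$, $\PNM$, $\PDiv$ and $\PGS$ on $P_c$'s input and output shares;
\item compute local operations deterministically.
\end{itemize}
Because no intermediate value other than these declared outputs is opened, every other share $P_c$ holds is uniform and independent of the secret~\cite{CCS:AFLNO16}.

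\textbf{Step 3: indistinguishability.} A hybrid argument completes the proof, replacing one sub-protocol transcript at a time by its simulated version; the number of hybrids is polynomial in $T_{\emph{num}}$, $B$ and $N$. Whatever leaks through the opened privatized gradients is leakage of the functionality itself, not of the protocol, and it is addressed separately by the output-privacy analysis (Theorem~\ref{them3}). This separation is what keeps the input-privacy claim valid.
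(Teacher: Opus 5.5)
Your proposal is correct and follows the paper's route. Like the paper, you extend the $\PFirstB$ case by observing that the extra chain-rule products of Eq.~\eqref{eq:multip} are $\PMult$ calls, i.e.\ $\mathcal{F}_{ABB}$ operations, and that the per-step opening of $\Tilde{\boldsymbol{g}}_{\boldsymbol{\phi}^{L}_{i}}$ reveals only the DP-protected gradient. Your version is more explicit than the paper's one-line remark: you declare these openings part of the functionality's output handed to the simulator, which also correctly fixes the appendix's claim that the only reconstruction happens at the final step.
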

\begin{proof}
The proof for $\PSecond$ extends Corollary~\ref{cor:input_privacy_bmf}: the additional $\PMult$ calls used to compute $\ash{\boldsymbol{g}_{\boldsymbol{\phi}^{L}_{i}, j}}$ via Eq.~\eqref{eq:multip} are $\mathcal{F}_{ABB}$ operations.
The reconstruction step, in which servers open $\Tilde{\boldsymbol{g}}_{\boldsymbol{\phi}^{L}_{i}}$ to client~$\cl{i}$, reveals only the DP-protected gradient and introduces no additional leakage.
\end{proof}

Based on the proofs for all above corollaries, we claim we have completed the proof for Theorem~\ref{thm:input_privacy_main}. 
}
\section{Differential Privacy}\label{app:dp}

\begin{definition}[Differential Privacy~\cite{dwork2014algorithmic}]\label{def:dp}
A randomized algorithm $\mathcal{A}$~ is said to be ($\epsilon$, $\delta$) differentially private if for any two neighboring input datasets $\mathcal{D}$ and $\mathcal{D}^{\prime}$, and for any subset $S$ of the output of $\mathcal{A}$, it always holds that
\begin{align}
\text{Pr}[\mathcal{A}(\mathcal{D}) \in S]\leq e^{\epsilon} \text{Pr}[\mathcal{A}(\mathcal{D}^\prime) \in S] + \delta ~~, \nonumber
\end{align}
for non-negative constants $\epsilon$ and $\delta$.
\end{definition}

\begin{definition}[Gaussian Mechanism~\cite{dwork2014algorithmic}]\label{def:gaussian}
Given a $d$-dimensional function $f$, with $\ell_2$ sensitivity $\Delta_2 f$ which is defined as 
\begin{align}
\Delta_2 f = \max\|f(\mathcal{D}) - f(\mathcal{D}^{\prime}) \|_2~~, \nonumber
\end{align}
for any two neighboring input datasets $\mathcal{D}$ and $\mathcal{D}^{\prime}$, for $\sigma \geq c\Delta_2 f/\epsilon $ and $c^2 > 2\ln(1.25/\delta)$, the Gaussian Mechanism with parameter $\sigma$ adds noise scaled to $\mathcal{N}(0, \sigma^2)$ to each of
the $d$ components of the output of $f$, is said to be ($\epsilon$, $\delta$) differentially private.
\end{definition}

\begin{definition}[Post-Processing~\cite{dwork2014algorithmic}]\label{def:post}
Given randomized algorithm $\mathcal{A}$ which is said to be ($\epsilon$, $\delta$) differentially private. Let $\mathcal{M}_R$ be an arbitrary randomized mapping, then $\mathcal{M}_R(\mathcal{A})$ is said to be ($\epsilon$, $\delta$) differentially private.  
\end{definition}

\begin{definition}[Sensitivity for the Matrix Factorization based Mechanism]\label{def:sens_MF}
The sensitivity of the matrix factorization mechanism Eq.~\eqref{eq:mf_eq} is defined as
\begin{align}
\text{sens}(\boldsymbol{C}_{\text{MF}}) = \sup_{\boldsymbol{G}~\sim~\boldsymbol{G}^{\prime}} \|\boldsymbol{C}_{\text{MF}}\boldsymbol{G} - \boldsymbol{C}_{\text{MF}}\boldsymbol{G}^{\prime} \|_{F}~~,
\end{align}
where $\boldsymbol{G}~\sim~\boldsymbol{G}^{\prime}$ denotes the \emph{neighborhood relation}, which indicates that the two data streams $\boldsymbol{G} $ and $\boldsymbol{G} ^{\prime}$, are differ in the contributions derived from exactly a single example. 
\end{definition}

\begin{definition}[Sensitivity for Decreasing Non-negative Toeplitz Matrices]\label{def:sens}
Suppose $\Phi = \text{LDToep}(\phi_0,\dots, \phi_{n-1}) \in \mathcal{R}^{n \times n}$ is a lower triangular Toeplitz matrix, with decreasing non-negative entries, as $\phi_0\geq \phi_1 \geq \dots \phi_{n-1} \geq 0$, then the sensitivity of $\Phi$ under $\text{b}$-min-separation is defined as
\begin{align}
\text{sens}_{\kappa,\text{b}}(\Phi) = \left\| \sum^{\kappa - 1}_{j=0} \Phi[:, 1 + j\text{b}]\right\|_{F}~, \nonumber
\end{align}
where $\Phi[:, 1 + j\text{b}]$ is the $1 + j\text{b}$-th column of $\Phi$
\end{definition}

\begin{definition}[Matrix Factorization Mechanism~\cite{denisov2022improved}]\label{def:maxtri_factorization}
Let $\boldsymbol{A}$ be a lower-triangular full-rank query matrix, and let $\boldsymbol{A}=\boldsymbol{B}_{\text{MF}}\boldsymbol{C}_{\text{MF}}$ be any factorization which has the following property that for any two neighboring streams of vectors $\boldsymbol{G} $ and $\boldsymbol{G} ^{\prime}$, we have 
\begin{align}
\|\boldsymbol{C}_{\text{MF}}\boldsymbol{G} - \boldsymbol{C}_{\text{MF}}\boldsymbol{G}^{\prime} \|_{F} \leq \zeta~~. \nonumber   
\end{align}
Let $\boldsymbol{Z}$ satisfied the distribution of $\mathcal{N}(0, \sigma^2\zeta^2)$ where $\sigma$ is large enough, so that the mechanism 
\begin{align}
\mathcal{M}_{\text{MF}}(\boldsymbol{G})=\boldsymbol{B}_{\text{MF}}(\boldsymbol{C}_{\text{MF}}\boldsymbol{G} + \boldsymbol{Z})~~, \nonumber
\end{align}
is said to be ($\epsilon$, $\delta$) differentially private in the nonadaptive continual release model. With the same parameters, mechanism $\mathcal{M}_{\text{MF}}(\cdot)$ is also said to satisfy the same DP guarantee even the input rows are chosen adaptively.
    
\end{definition}


{\subsection{Proof of Theorem~\ref{them1}}~\label{sec:proof}
\begin{proof}
To the protocol in~\figref{pfirst1}, 
at each step, as the $\ell_2$ sensitivity $\Delta_2 f$ of $\ash{\overline{\boldsymbol{g}}_{\theta}} $ is bounded by $\gamma$ through the clipping, the $(\epsilon, \delta)$ differential privacy guarantee of $\ash{\tilde{\boldsymbol{g}}_{\theta}}$ follows from the application of the Gaussian Mechanism~(\ref{def:gaussian}). Here let's assume each step satisfies $(\epsilon_1, \delta_1)$-DP.

With local models are frozen, and subsampling with the sampling ratio of $B/M$ is conducted through running MPC secure shuffling protocols at servers, the final privacy budget spent on global model training follows from the application of the the R\'enyi Differential Privacy (RDP) framework~\cite{Mironov2017renyi} for privacy accounting, which formalizes the moments accountant method introduced by~\cite{abadi2016privacy}. Furthermore, the RDP at order $\alpha$ for subsampled Gaussian mechanism is computed following~\cite{wang2019subsampled}. Hence, the RDP guarantee at order $\alpha^o$ composes linearly over $T_{num}$ steps, and we convert to $(\epsilon, \delta)$-DP through
\begin{align}
\epsilon = \min_{\alpha^o > 1} \{ \epsilon_{\alpha^o}^{(T_{num})} + \log(1/\delta)/(\alpha^o - 1) \}~~. \nonumber 
\end{align}
where $\epsilon_{\alpha^o}^{(T_{num})} = T_{num}\cdot\epsilon_{\alpha^o}^{(1)}$. Note here $(\epsilon_1, \delta_1)$-DP equivalently satisfies $(\alpha^o, \epsilon_{\alpha^o}^{(1)})$-RDP, and $\epsilon_{\alpha^o}^{(1)}$ is a function of the sampling ratio $B/M$ and other related mechanism parameters. For a detailed computation on $\epsilon_{\alpha^o}^{(1)}$, we refer the reader to~\cite{Mironov2017renyi} and skip the derivation here. We also refer the reader to~\cite{opacus} for implementation details. 

Therefore, the protocol in~\figref{pfirst1} is $(\epsilon, \delta)$ differentially private with respect to each private dataset $\mathcal{D}_i$, with $i \in [1, N]$.  

\end{proof}





\subsection{Proof of Theorem~\ref{them2}}~\label{sec:proof2} 
\begin{proof}
To the protocol in~\figref{pfirst2}, with local models are frozen and the $\text{b}$-min-separated participation that applied on all clients' datasets is pre-defined, the workload matrix $\boldsymbol{A}$ as well as the $p$-BSR matrix $\boldsymbol{\Omega}$ are therefore constructed.
Furthermore, the sensitivity of $\boldsymbol{\Omega}$, denoted~$\text{sens}_{\kappa,\text{b}}(\Omega)$, is computed and defined. 

At each step, due to the clipping, the $\ell_2$ sensitivity $\Delta_2 f$ of $\ash{\overline{\boldsymbol{g}}_{\theta}} $ is bounded by $\gamma$. Moreover, we follow the same setting in~\cite{Kalinin2024banded} to use FFT-based DP accounting~\cite{Koskela2020fft} to compute the appropriate $\sigma$ given target $(\epsilon, \delta)$. As a result, the $(\epsilon, \delta)$ DP guaranteed Gaussian matrix $\ash{\boldsymbol{\mathcal{N}}} \sim \mathcal{N}(\mathbf{0}, \sigma^2 I_{T_{num} \times n_{\boldsymbol{\theta}}})$ is generated~(under secured MPC) for the entire $T_{num}$ steps, and the correlated matrix 
\[\ash{\boldsymbol{\mathcal{N}}_{\text{Corr}}}= \boldsymbol{\Omega}^{-1}\ash{\boldsymbol{\mathcal{N}}_{\text{Tab}}}\] 
where $\ash{\boldsymbol{\mathcal{N}}_{\text{Tab}}} = \gamma \text{sens}_{\kappa,\text{b}}(\Omega)\ash{\boldsymbol{\mathcal{N}}}$ is obtained, which is still $(\epsilon, \delta)$ satisfied, guaranteed by post-processing~(\ref{def:post}). Hence, the training mechanism that applied on the \emph{stream} of $\ash{\overline{\boldsymbol{g}}_{\theta}} $ across all steps is exact the Matrix Factorization Mechanism~(\ref{def:maxtri_factorization}), which makes the \emph{stream} of $\ash{\Tilde{\boldsymbol{g}}_{\boldsymbol{\theta}}}$ across all steps be $(\epsilon, \delta)$ differentially private. We also refer the reader to~\cite{gdpl} for implementation details. 

Therefore, the protocol in~\figref{pfirst2} is $(\epsilon, \delta)$ differentially private with respect to each private dataset $\mathcal{D}_i$, with $i \in [1, N]$. 
\end{proof}

\subsection{Proof of Theorem~\ref{them3}}~\label{sec:proof3} 
\begin{proof}
To the protocol in~\figref{psecond}, at each step, the information that revealed by the servers is $\ash{\Tilde{\boldsymbol{g}}} = \frac{1}{B}(\ash{\overline{\boldsymbol{g}}} + \ash{\boldsymbol{\mathcal{N}}_{\Tilde{\boldsymbol{g}}} })$. As discussed in~\ref{sec:proof2}, with the sensitivity of $\boldsymbol{\Omega}$ is defined, as well as the $\ell_2$ sensitivity $\Delta_2 f$ of $\ash{\overline{\boldsymbol{g}}} $ is also bounded by $\gamma$, the \emph{stream} of $\ash{\Tilde{\boldsymbol{g}}} $ across all $T_{num}$ steps is $(\epsilon, \delta)$ differentially private under the Matrix Factorization mechanism~(\ref{def:maxtri_factorization}). 

As the output of the differentially private training mechanism technically consists of the full sequence of per-iteration privatized gradients, hence, as the intermediate result, the privatized gradient at each step $\ash{\Tilde{\boldsymbol{g}}} $ is also DP protected~\cite{Choquette-Choo2023amplified}. Therefore, at each step, at the global model part, $\ash{\Tilde{\boldsymbol{g}}_{\boldsymbol{\theta}}} $, is naturally differentially private as $\ash{\Tilde{\boldsymbol{g}}_{\boldsymbol{\theta}}} $ is a sub-vector of $\ash{\Tilde{\boldsymbol{g}}}$. At each local model part, as a sub-vector of $\Tilde{\boldsymbol{g}}$, $\Tilde{\boldsymbol{g}}_{\boldsymbol{\phi}^{L}_{i}}$ is also $(\epsilon, \delta)$ differentially private. 

Furthermore, all the information with respect to $\mathcal{D}_j, j \ne i$ that is derived by each client $E_i$ from $\Tilde{\boldsymbol{g}}_{\boldsymbol{\phi}^{L}_{i}}$ is also $(\epsilon, \delta)$ differentially private, guaranteed by the post-processing~(\ref{def:post}) of differential privacy:~In~\figref{psecond}, at each client, here $\mathcal{M}_R(\cdot)$, as defined in~\ref{def:post}, functions to extract estimates of the per-sample gradients from $\Tilde{\boldsymbol{g}}_{\boldsymbol{\phi}^{L}_{i}}$. 

Therefore, the protocol in~\figref{psecond} is $(\epsilon, \delta)$ differentially private with respect to each private dataset $\mathcal{D}_i$, with $i \in [1, N]$. 
\end{proof}
}
\section{Estimation Error Bound Analysis}\label{app:error}

Considering to solve the following linear system
\begin{align}
\Tilde{\boldsymbol{g}}_{\boldsymbol{\phi}^{L}_{i}} = \frac{1}{B}({\interout}_{{\boldsymbol{\phi}}^{L}_i} \cdot  \widehat{\boldsymbol{g}}_{\textbf{H}^{(i)}_b} + \boldsymbol{\mathcal{N}}_i)~~, \quad \text{with} \quad \|\widehat{\boldsymbol{g}}_{\textbf{H}^{(i)}_b}\|_{\ell_2} \le \gamma~~\nonumber 
\end{align}
where $\boldsymbol{\mathcal{N}}_i \sim \mathcal{N}(0, \sigma_t^2 I)$ and $\sigma_t = \sigma \gamma \text{sens}_{\kappa,\text{b}}(\boldsymbol{\Omega})\|\boldsymbol{\Omega}^{-1}[t,:]\|^2_{\ell_2}$. First, let's define the effective noise variance after scaling as:
\begin{align}
\sigma_{t(B)}^2 = \frac{\sigma_t^2}{B^2}~~. \nonumber
\end{align} 

As the RR estimator with ridge parameter $\lambda = \sigma_{t(B)}^2$ is defined by
\begin{align}
\widehat{\boldsymbol{g}}^{\text{ridge}}_{\textbf{H}^{(i)}_b} = ({\interout}_{{\boldsymbol{\phi}}^{L}_i}^T {\interout}_{{\boldsymbol{\phi}}^{L}_i} + \sigma_{t(B)}^2 I)^{-1} {\interout}_{{\boldsymbol{\phi}}^{L}_i}^T \Tilde{\boldsymbol{g}}_{\boldsymbol{\phi}^{L}_{i}}~~, \nonumber 
\end{align}
the estimation error can be decomposed as
\begin{align}
\widehat{\boldsymbol{g}}^{\text{ridge}}_{\textbf{H}^{(i)}_b} - \widehat{\boldsymbol{g}}_{\textbf{H}^{(i)}_b}
&= ({\interout}_{{\boldsymbol{\phi}}^{L}_i}^T {\interout}_{{\boldsymbol{\phi}}^{L}_i} + \sigma_{t(B)}^2 I)^{-1} {\interout}_{{\boldsymbol{\phi}}^{L}_i}^T \Tilde{\boldsymbol{g}}_{\boldsymbol{\phi}^{L}_{i}} - \widehat{\boldsymbol{g}}_{\textbf{H}^{(i)}_b} \nonumber \\
&= \underbrace{- \sigma_{t(B)}^2 ({\interout}_{{\boldsymbol{\phi}}^{L}_i}^T {\interout}_{{\boldsymbol{\phi}}^{L}_i} + \sigma_B^2 I)^{-1} \widehat{\boldsymbol{g}}_{\textbf{H}^{(i)}_b}}_{\text{bias term}} + \nonumber \\
& \underbrace{({\interout}_{{\boldsymbol{\phi}}^{L}_i}^T {\interout}_{{\boldsymbol{\phi}}^{L}_i} + \sigma_{t(B)}^2 I)^{-1} {\interout}_{{\boldsymbol{\phi}}^{L}_i}^T \frac{\mathcal{N}_i}{B}}_{\text{variance term}}~~. \nonumber 
\end{align}

Furthermore, taking expectation over the noise, which results in
\begin{align}
\mathbb{E}_{\text{Ridge}} = \mathbb{E}\|\widehat{\boldsymbol{g}}^{\text{ridge}}_{\textbf{H}^{(i)}_b} - \widehat{\boldsymbol{g}}_{\textbf{H}^{(i)}_b}\|_{\ell_2}^2
= \sigma_{t(B)}^4 \|({\interout}_{{\boldsymbol{\phi}}^{L}_i}^T {\interout}_{{\boldsymbol{\phi}}^{L}_i} + \sigma_B^2 I)^{-1} \widehat{\boldsymbol{g}}_{\textbf{H}^{(i)}_b}\|_{\ell_2}^2
+ \nonumber \\
\sigma_{t(B)}^2 \, \mathrm{tr}\Big(({\interout}_{{\boldsymbol{\phi}}^{L}_i}^T {\interout}_{{\boldsymbol{\phi}}^{L}_i} + \sigma_{t(B)}^2 I)^{-1} {\interout}_{{\boldsymbol{\phi}}^{L}_i}^T  {\interout}_{{\boldsymbol{\phi}}^{L}_i} ({\interout}_{{\boldsymbol{\phi}}^{L}_i}^T {\interout}_{{\boldsymbol{\phi}}^{L}_i} + \sigma_{t(B)}^2 I)^{-1}\Big)~~, \nonumber 
\end{align}
where $\mathrm{tr}(\cdot)$ denotes the trace operation on matrix. Let $\mu_{\min}$ and $\mu_{\max}$ denote the smallest and largest eigenvalues of ${\interout}_{{\boldsymbol{\phi}}^{L}_i}^T {\interout}_{{\boldsymbol{\phi}}^{L}_i}$, respectively. By using $\|\widehat{\boldsymbol{g}}_{\textbf{H}^{(i)}_b}\|_{\ell_2} \le \gamma$ and the spectral norm bound on the variance term, we could bound the two terms as follows:

\paragraph{Bias term} As we have
\[
\|({\interout}_{{\boldsymbol{\phi}}^{L}_i}^T {\interout}_{{\boldsymbol{\phi}}^{L}_i} + \sigma_{t(B)}^2 I)^{-1}\|_{\ell_2} = 1/(\mu_{\min}+\sigma_{t(B)}^2)~~,
\]
by using the operator norm inequality, we could obtain
\begin{align}
&\sigma_{t(B)}^4 \|({\interout}_{{\boldsymbol{\phi}}^{L}_i}^T {\interout}_{{\boldsymbol{\phi}}^{L}_i} + \sigma_B^2 I)^{-1} \widehat{\boldsymbol{g}}_{\textbf{H}^{(i)}_b}\|_{\ell_2}^2
\nonumber \\
& \le \sigma_{t(B)}^4 \|({\interout}_{{\boldsymbol{\phi}}^{L}_i}^T {\interout}_{{\boldsymbol{\phi}}^{L}_i} + \sigma_B^2 I)^{-1}\|_{\ell_2}^2 \|\widehat{\boldsymbol{g}}_{\textbf{H}^{(i)}_b}\|_{\ell_2}^2
\nonumber \\ 
&\le \frac{\sigma_{t(B)}^4 \gamma^2}{(\mu_{\min} + \sigma_{t(B)}^2)^2}~~. \nonumber
\end{align}

\paragraph{Variance term}
Diagonalize \({\interout}_{{\boldsymbol{\phi}}^{L}_i}^T {\interout}_{{\boldsymbol{\phi}}^{L}_i}\) with eigenvalues \(\mu_i\) where $i= 1,\dots, d_{\interout^{(i)}} B$. The variance term equals
\begin{align}
 &\sigma_{t(B)}^2 \, \mathrm{tr}\Big(({\interout}_{{\boldsymbol{\phi}}^{L}_i}^T {\interout}_{{\boldsymbol{\phi}}^{L}_i} + \sigma_{t(B)}^2 I)^{-1} {\interout}_{{\boldsymbol{\phi}}^{L}_i}^T  {\interout}_{{\boldsymbol{\phi}}^{L}_i} ({\interout}_{{\boldsymbol{\phi}}^{L}_i}^T {\interout}_{{\boldsymbol{\phi}}^{L}_i} + \sigma_{t(B)}^2 I)^{-1}\Big) \nonumber \\
&  = \sigma_{t(B)}^2 \mathrm{tr}\!\big((\Lambda+\sigma_{t(B)}^2 I)^{-1}\Lambda(\Lambda+\sigma_{t(B)}^2 I)^{-1}\big) \nonumber \\
& = \sigma_{t(B)}^2\sum_{i=1}^{d_{\interout^{(i)}}B} \frac{\mu_i}{(\mu_i + \sigma_{t(B)}^2)^2}~~,  \nonumber 
\end{align}
where $\Lambda = \text{diag}(\mu_i)$ for $i=1,\dots,d_{\interout^{(i)}}B$. A relax upper bound is obtained by upper-bounding each fraction by \(\mu_{\max}/(\mu_{\min}+\sigma_{t(B)}^2)^2\) and summing, which results in
\[
\sigma_{t(B)}^2 \sum_{i=1}^{d_{\interout^{(i)}}B} \frac{\mu_i}{(\mu_i + \sigma_{t(B)}^2)^2}
\le \sigma_{t(B)}^2 \cdot d_{\interout^{(i)}}B \cdot \frac{\mu_{\max}}{(\mu_{\min} + \sigma_{t(B)}^2)^2}~~.
\]

Eventually, we could combine the two above bounds together, which yields the fully bounded error, as
\begin{align}
&\mathbb{E}_{\text{Ridge}}
\le
\frac{\sigma_{t(B)}^4 \gamma^2}{(\mu_{\min} + \sigma_{t(B)}^2)^2}
+
\frac{\sigma_{t(B)}^2 d_{\interout^{(i)}}B \, \mu_{\max}}{(\mu_{\min} + \sigma_{t(B)}^2)^2} \nonumber \\
& = \frac{\sigma_{t(B)}^4 \gamma^2 + \sigma_{t(B)}^2 d_{\interout^{(i)}}B\, \mu_{\max}}{(\mu_{\min} + \sigma_{t(B)}^2)^2} \nonumber \\
& = \frac{\sigma_{t}^4 \gamma^2 + \sigma_{t}^2 d_{\interout^{(i)}}B^3\, \mu_{\max}}{(B^2\mu_{\min} + \sigma_{t}^2)^2}~~.
\end{align}

As ${\interout}_{{\boldsymbol{\phi}}^{L}_i}$ is determined by the clients' data and local model parameters, the client can pre-compute the estimation error bound, which can serve as a “confidence measure” to monitor updates over iterations and track the reliability of the DP-noised gradients. Although the bound cannot be used to adjust batch size or DP noise (to preserve privacy), observing its trend provides valuable diagnostic information about the solver’s stability and the confidence of the updates, which in turn can inform the future choice of fixed parameters or guide offline experiments.

However, the above proposal is currently conceptual, and we leave it as a future direction to explore more systematic formulations, theoretical guarantees, and practical implementations that bridge the gap between DP-noised linear estimation and potential DP-safe adaptive optimization strategies.



\end{document}